\spnewtheorem{assumption}{Assumption}{\bfseries}{\itshape}
\def\bstr{b}
\def\bfstr{bf}
\def\cstr{c}
\def\fstr{f}
\def\strLst{A,B,C,D,d,E,F,G,H,I,J,K,L,M,N,O,P,Q,R,S,T,U,V,W,X,Y,Z}
\newcommand{\MkB}[1]{\expandafter\def\csname\bstr#1\endcsname{\mathbb{#1}}}
\strLst\do{%
    \expandafter\MkB \i     }
\newcommand{\MkBF}[1]{\expandafter\def\csname\bfstr#1\endcsname{\mathbf{#1}}}
\strLst\do{%
    \expandafter\MkBF \i     }
\newcommand{\MkCal}[1]{\expandafter\def\csname\cstr#1\endcsname{\mathcal{#1}}}
\strLst\do{%
    \expandafter\MkCal \i     }
\newcommand{\MkFrak}[1]{\expandafter\def\csname\fstr#1\endcsname{\mathfrak{#1}}}
\strLst\do{%
    \expandafter\MkFrak \i     }
\newcommand{\MatchGT}[3]{\mathsf{M}^{{\text{\tiny $#1$}}}_{#2}(#3)}
\newcommand{\RMatchGT}[3]{\mathcal{M}^{{\text{\tiny $#1$}}}_{#2}(#3)}
\newcommand{\pO}[1]{\mathsf{PO}(#1)}
\newcommand{\mono}[1]{\mathsf{mono}(#1)}
\newcommand{\regmono}[1]{\mathsf{rm}(#1)}
\newcommand{\epi}[1]{\mathsf{epi}(#1)}
\newcommand{\mor}[1]{\mathsf{mor}(#1)}
\newcommand{\iso}[1]{\mathsf{iso}(#1)}
\newcommand{\obj}[1]{\mathsf{obj}(#1)}
\newcommand{\mIO}{\mathop{\varnothing}}
\newcommand{\comp}[3]{#1 \stackrel{#2}{\blacktriangleleft} #3}
\newcommand{\sqComp}[3]{#1 \stackrel{#2}{\sphericalangle} #3}
\newcommand{\Msum}[2]{\text{$\sum\nolimits_{\cM}$}(#1,#2)}
\newcommand{\mPOC}[2]{\cP(#1,#2)}
\newcommand{\FPA}[2]{\mathsf{FPA}(#1,#2)}
\newcommand{\git}{\mathbin{
  \mathchoice{/\mkern-6mu/}%
    {/\mkern-6mu/}%
    {/\mkern-5mu/}%
    {/\mkern-5mu/}}}%
\newcommand{\ti}[1]{%
 \ensuremath{\vcenter{\hbox{\includegraphics{diagrams/#1.pdf}}}}%
}
\colorlet{h1color}{blue!70!black} %
\colorlet{h2color}{orange!90!black} %
\colorlet{h3color}{blue!40!white} %
\colorlet{h4color}{green!40!black} %
\begin{document}
\title{Concurrency Theorems for \\Non-linear Rewriting Theories\thanks{This is an extended version (containing additional technical appendices) of a paper with the same tittle accepted for \href{https://icgt2021.di.unipi.it}{ICGT 2021}.}}
\titlerunning{Concurrency Theorems for Non-linear Rewriting Theories}
\author{Nicolas Behr\inst{1}\,\textsuperscript{\faEnvelopeO}
\and Russ Harmer\inst{2}
\and Jean Krivine\inst{1}}
\authorrunning{N. Behr, R. Harmer and J. Krivine}
\institute{Universit\'{e} de Paris, CNRS, IRIF\\
8 Place Aur\'{e}lie Nemours, 75205 Paris Cedex 13, France\\
\email{nicolas.behr@irif.fr}
\email{jean.krivine@irif.fr}
\and
Universit\'{e} de Lyon, ENS de Lyon, UCBL, CNRS, LIP\\
46 allée d'Italie, 69364 Lyon Cedex 07, France\\
\email{russell.harmer@ens-lyon.fr}
}
\maketitle
\begin{abstract}
Sesqui-pushout (SqPO) rewriting along non-linear rules and for monic matches is well-known to permit the modeling of fusing and cloning of vertices and edges, yet to date, no construction of a suitable concurrency theorem was available. The lack of such a theorem, in turn, rendered compositional reasoning for such rewriting systems largely infeasible. We develop in this paper a suitable concurrency theorem for non-linear SqPO-rewriting in categories that are quasi-topoi (subsuming the example of adhesive categories) and with matches required to be regular monomorphisms of the given category. Our construction reveals an interesting ``backpropagation effect'' in computing rule compositions. We derive in addition a concurrency theorem for non-linear double pushout (DPO) rewriting in rm-adhesive categories. Our results open non-linear SqPO and DPO semantics to the rich static analysis techniques available from concurrency, rule algebra and tracelet theory.
\end{abstract}

\section{Introduction}

Sesqui-pushout (SqPO) graph transformation was introduced~\cite{Corradini_2006} as an extension of single-pushout rewriting that accommodates the possibility of non-input-linear\footnote{In this paper, we follow the conventions of compositional rewriting theory~\cite{behrRaSiR}, i.e., we speak of ``input''/``output'' motifs of rules, as opposed to ``left''/``right'' motifs in the traditional literature~\cite{ehrig:2006fund}.}  rules. The result of such a rewrite is specified abstractly by the notion of \emph{final pullback complement (FPC)}~\cite{dyckhoff1987exponentiable}, a categorical generalization of the notion of set difference: the FPC of two composable arrows, $f : A \rightarrow B$ and $g : B \rightarrow D$ is the largest, \emph{i.e.} least general, $C$ together with arrows $g' : A \rightarrow C$ and $f' : C \rightarrow D$ for which the resulting square is a pullback (PB). The extension of graph transformation to input-non-linear rules allows for the expression of the natural operation of the \emph{cloning} of a node, or an edge (when the latter is meaningful), as explained in~\cite{Corradini_2006,Corradini_2015,BRAATZ2011246} . More recently, such rules have also been used to express operations such as concept refinement in schemata for graph databases~\cite{bonifati2019schema} and, more generally, in graph-based knowledge representation~\cite{harmer2020knowledge}. In combination with output-non-linear rules, as for (non-linear) double- or single-pushout rewriting, SqPO thus allows the expression of all the natural primitive operations on graphs: addition and deletion of nodes and edges; and cloning and merging of nodes and edges.

In this paper, we study the categorical structure required in order to support SqPO rewriting and establish that \emph{quasi-topoi}~\cite{adamek2006,COCKETT2002223,COCKETT200361,quasi-topos-2007,garner2012axioms} naturally possess all the necessary structure to express the effect of SqPO rewriting and to prove the concurrency theorem for fully general non-linear rules. This significantly generalizes previous results on concurrency theorems for linear SqPO-rewriting over adhesive categories~\cite{nbSqPO2019} and for linear SqPO-rewriting for linear rules with conditions in $\cM$-adhesive categories~\cite{behrRaSiR,BK2020}. In terms of SqPO-rewriting for generic rules, previous results were rather sparse and include work on polymorphic SqPO-rewriting~\cite{Loewe_2015} and on reversible SqPO rewriting~\cite{reversibleSqPO,harmer2020reversibility}, where~\cite{harmer2020reversibility} in particular introduced a synthesis (but not an analysis) construction for reversible non-linear SqPO rules without application conditions which motivated the present paper. 

An interesting technical aspect of basing our constructions on quasi-topoi concerns the rewriting of simple directed graphs, which constitutes one of the running examples in this paper: unlike the category of directed multigraphs (which constitutes one of the prototypical examples of an adhesive category~\cite{ls2004adhesive}), the category of simple graphs is neither adhesive nor quasi-adhesive~\cite{quasi-topos-2007}, but it is in fact only a quasi-topos~\cite{adamek2006,quasi-topos-2007}, and as such also an example of an rm-quasi-adhesive~\cite{garner2012axioms} and of an $\cM$-adhesive category~\cite{ehrig2004adhesive,ehrig2010categorical,10.1007/978-3-642-15928-2_17,ehrig2014mathcal}. 

Our proof of the concurrency theorem relies on the existence of certain structures in quasi-topoi that, to the best of our knowledge, have not been previously noted in the literature (cf.\ Section~\ref{sec:special}): restricted notions of \emph{multi-sum} and \emph{multi-pushout complement (mPOC)}, along the lines of the general theory of multi-(co-) limits due to Diers~\cite{diers1978familles}, and a notion of \emph{FPC-pushout-augmentation (FPA)}. The notion of multi-sum provides a generalization of the property of effective unions (in adhesive categories) that guarantees that all necessary monos are regular. The notions of mPOC and FPA handle the ``backward non-determinism'' introduced by non-linear rules: given a rule and a matching from its output motif, we cannot---unlike with linear or reversible non-linear rules---uniquely determine a matching from the input motif of the rule.\\

\noindent\textbf{Related work }Conditions under which FPCs are guaranteed to exist have been studied in \cite{dyckhoff1987exponentiable}, and more concretely and of particular relevance to our approach in~\cite{Corradini_2015}, which provides a direct construction assuming the existence of appropriate partial map classifiers~\cite{10.1007/978-3-642-15928-2_17,COCKETT200361}. We make additional use of these partial map classifiers in order to construct mPOCs in a quasi-topos (Section~\ref{sec:special}). Our construction is a mild, but necessary for our purposes, generalization of the notion of minimal pushout complement defined in~\cite{BRAATZ2011246} that requires the universal property with respect to a larger class of encompassing pushouts (POs)---precisely analogous to the definition of FPC. However, there is the additional complexity that, for our purposes, PO complements are not uniquely determined, and we must therefore specify a family of solutions that collectively satisfy this universal property (\`a la Diers \cite{diers1978familles}). We also exploit the epi-regular mono factorization~\cite{adamek2006} in quasi-topoi in order to construct multi-sums---with respect to co-spans of regular monos---and FPAs. Our overall approach relates closely to the work of Garner and Lack on rm-quasi-adhesive categories~\cite{garner2012axioms}, which provide an abstract setting for graph transformation that accommodates the technical particularities of simple graphs---notably the fact that the `exactness' direction of the van Kampen condition fails in general for cubes where the vertical arrows, between the two PO faces, are not regular.

\section{Quasi-topoi}\label{sec:qt}

In this section, we will demonstrate that quasi-topoi form a natural setting within which non-linear sesqui-pushout (SqPO) rewriting is well-posed. Quasi-topoi have been considered in the context of rewriting theories as a natural generalization of adhesive categories in~\cite{lack2005adhesive}. While several adhesive categories of interest to rewriting are topoi, including in particular the category $\mathbf{Graph}$ of directed multigraphs (cf.\ Definition~\ref{def:Graph}), it is not difficult to find examples of categories equally relevant to rewriting theory that fail to be topoi. A notable such example is the category $\mathbf{SGraph}$ of directed simple graphs (cf.\ Definition~\ref{def:SGraph}).

We will demonstrate that quasi-topoi combine all technical properties necessary such as to admit the construction of non-linear sesqui-pushout semantics over them. We will first list these abstract properties, and illustrate them via the two aforementioned paradigmatic examples of topoi and quasi-topoi.

Let us first recall a number of results from the work of Cockett and Lack~\cite{COCKETT2002223,COCKETT200361} on restriction categories. We will only need a very small fragment of their theory, namely the definition and existence guarantees for $\cM$-partial map classifiers, so we will follow mostly \cite{Corradini_2015}. We will in particular not be concerned with the notion of $\cM$-partial maps itself.

\begin{definition}[\cite{COCKETT2002223}, Sec.~3.1]
For a category $\bfC$, \AP a \emph{\intro{stable system of monics}} $\cM$ is a class of monomorphisms of $\bfC$ that $(i)$ includes all isomorphisms, $(ii)$ is stable under composition, and $(iii)$ is \AP\intro{stable under pullbacks} (i.e., if $(f',m')$ is a pullback of $(m,f)$ with $m\in \cM$, then $m'\in \cM$). Throughout this paper, we will reserve the notation $\rightarrowtail$ for monics in $\cM$, and $\hookrightarrow$ for generic monics.
\end{definition}

\begin{definition}[\cite{Corradini_2015}, Sec.~2.1; compare~\cite{COCKETT200361}, Sec.~2.1]
For a \kl{stable system of monics} $\cM$ in a category $\bfC$, an \AP\emph{\intro{$\cM$-partial map classifier}} $(T,\eta)$ is a functor $T:\bfC\rightarrow \bfC$ and a natural transformation $\eta:ID_{\bfC}\xrightarrow{.} T$ such that 
\begin{enumerate}
\item for all $X\in \obj{\bfC}$, $\eta_X:X\rightarrow T(X)$ is in $\cM$
\item for each span $(A\xleftarrow{m}X\xrightarrow{f}B)$ with $m\in \cM$, there exists a unique morphism $A\xrightarrow{\varphi(m,f)}T(B)$ such that $(m,f)$ is a pullback of $(\varphi(m,f),\eta_B)$.
\end{enumerate}
\end{definition}

\begin{proposition}[\cite{Corradini_2015},  Prop.~6]
For every $\cM$-partial map classifier $(T,\eta)$, $T$ preserves pullbacks, and $\eta$ is Cartesian, i.e., for each $X\xrightarrow{f}Y$, $(\eta_x,f)$ is a pullback of $(T(f),\eta_Y)$.
\end{proposition}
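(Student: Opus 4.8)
The plan is to reduce the statement to understanding how $T$ acts on morphisms, after which both assertions become routine pullback pasting.

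First I would record the elementary calculus of the classifying morphisms $\varphi(m,f)$ supplied by condition~(2) of the definition, all obtained by composing pullback squares: $\varphi(\eta_X,\mathrm{id}_X)=\mathrm{id}_{TX}$ and $\varphi(\mathrm{id}_X,g)=\eta_Y\circ g$ (because $\eta_Y$ is monic); reading~(2) backwards, pulling back $\eta_B$ along $\varphi(m,f)$ returns the span $(m,f)$; and ``naturality in the source'': for $h\colon A'\to A$ one has $\varphi(m,f)\circ h=\varphi(h^{\ast}m,\,f\circ\widetilde h)$, where $(h^{\ast}m,\widetilde h)$ is the pullback of $(h,m)$, since pasting the square defining $h^{\ast}m$ below the one defining $\varphi(m,f)$ is again a pullback. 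Equivalently, $TB$ represents the functor sending $A$ to the set of spans $A\leftarrowtail\bullet\to B$ with left leg in $\cM$, with universal element the span $(\eta_B,\mathrm{id}_B)$.

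Next I would pin down $T$ on morphisms. Put $\widehat T(f):=\varphi(\eta_X,f)$ for $f\colon X\to Y$. The facts above make $\widehat T$ preserve identities, and since pulling back $\eta_Y$ along $\varphi(\eta_X,f)$ returns $(\eta_X,f)$, one pasting step gives $\varphi(\eta_Y,g)\circ\varphi(\eta_X,f)=\varphi(\eta_X,g\circ f)$; so $\widehat T$ is a functor with the same object part as $T$, and by its very construction each square $\widehat T(f)\circ\eta_X=\eta_Y\circ f$ is a pullback, i.e.\ $\eta\colon ID_{\bfC}\Rightarrow\widehat T$ is Cartesian. The crux is then $T=\widehat T$. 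For $f\colon X\to Y$, take the pullback $(\pi_1\colon P\rightarrowtail TX,\ \pi_2\colon P\to Y)$ of $(T(f),\eta_Y)$ (so $\pi_1\in\cM$) with canonical comparison $\theta\colon X\to P$, $\pi_1\theta=\eta_X$, $\pi_2\theta=f$; by uniqueness in~(2), $T(f)=\varphi(\pi_1,\pi_2)$. Precomposing with $\eta_X$ and using naturality in the source gives $T(f)\circ\eta_X=\varphi(w_1,\pi_2 w_2)$ for $(w_1,w_2)$ the pullback of $(\eta_X,\pi_1)$; but $T(f)\circ\eta_X=\eta_Y\circ f=\varphi(\mathrm{id}_X,f)$, so these spans are isomorphic, $w_1$ is invertible, and $\theta=w_2 w_1^{-1}$. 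Hence $\theta$ is invertible iff $\pi_1$ factors through $\eta_X$, i.e.\ iff $T(f)$ does not carry the ``generic'' part of $TX$ into the image of $\eta_Y$; establishing this is the one genuinely delicate step, and it must use functoriality of $T$ and not merely naturality of $\eta$ (since $\eta_X$ is not epic) — concretely, routing the naturality square through the classifier together with the composition calculus forces $\pi_1$.

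Granting $T=\widehat T$, the proposition follows. \emph{$\eta$ is Cartesian} is immediate. For \emph{$T$ preserves pullbacks}, given a pullback $(a\colon D\to A,\ b\colon D\to B)$ of $(c\colon A\to C,\ d\colon B\to C)$, I would test its (commuting) $T$-image against a cone $z_A=\varphi(m_A,g_A)$, $z_B=\varphi(m_B,g_B)$ with $T(c)z_A=T(d)z_B$: the composition calculus rewrites this as $\varphi(m_A,cg_A)=\varphi(m_B,dg_B)$, hence an isomorphism of the two spans, so after transport one may assume a common left leg $m$ with $cg_A=dg_B$; then $g_A,g_B$ induce $e\colon E\to D$ into the original pullback and $z_D:=\varphi(m,e)$ is the mediating morphism, uniqueness being argued the same way via the ``read backwards'' fact. (More conceptually, $T$ is the restriction along $\bfC\hookrightarrow\mathsf{Par}_{\cM}(\bfC)$ of its right adjoint, so it preserves pullbacks because right adjoints preserve limits and that inclusion preserves pullbacks of total maps.) The bottleneck throughout is the invertibility of $\theta$; everything else is formal.
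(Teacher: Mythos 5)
The paper offers no proof of this proposition --- it is quoted verbatim from Corradini et al.\ (Prop.~6 of the AGREE paper) --- so there is no in-paper argument to compare yours against; I can only judge the attempt on its own terms. Your preparatory calculus of the classifying maps $\varphi(m,f)$ is correct (all of it is pullback pasting plus the uniqueness clause), the verification that $\widehat T(f):=\varphi(\eta_X,f)$ is functorial and that $\eta\colon ID_{\bfC}\Rightarrow\widehat T$ is Cartesian is correct, and, \emph{granting} $T=\widehat T$, your representability argument for preservation of pullbacks is essentially right: the key point is that $\bfC(A,TB)\cong\mathsf{Par}_{\cM}(A,B)$ naturally in $A$ and that the span-functor $\mathsf{Par}_{\cM}(A,-)$ sends pullbacks of $\bfC$ to pullbacks of sets. (Your parenthetical appeal to ``right adjoints preserve limits'' is not a shortcut around this: the inclusion $\bfC\to\mathsf{Par}_{\cM}(\bfC)$ is the \emph{left} adjoint, so one must still check that it sends pullback squares to pullback squares in $\mathsf{Par}_{\cM}(\bfC)$, which is exactly the explicit span computation you give anyway.)

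The genuine gap is the step you yourself flag as ``delicate'': the identification $T=\widehat T$, equivalently the invertibility of the comparison $\theta\colon X\to P$ into the pullback $(\pi_1,\pi_2)$ of $(T(f),\eta_Y)$. Everything you derive up to that point --- that $\eta_X^{*}(\pi_1)$ is invertible and hence that $\theta$ exists --- uses only naturality of $\eta$, and, as you correctly observe, this cannot suffice since $\eta_X$ is not epic; what is needed is precisely that $\pi_1$ factors through $\eta_X$, i.e.\ that $T(f)$ carries the part of $TX$ outside the image of $\eta_X$ outside the image of $\eta_Y$. The sentence ``routing the naturality square through the classifier together with the composition calculus forces $\pi_1$'' is not an argument: the classification axiom constrains a map $A\to TB$ only through the span obtained by pulling back $\eta_B$ along it, and for $T(f)$ that span is the unknown $(\pi_1,\pi_2)$ itself, so this is circular. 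This one step is the entire content of the Cartesianness claim (and of $T=\widehat T$, on which your pullback-preservation argument also rests). The standard way to close it is to take the morphism part of $T$ to be \emph{defined} by the representation $\bfC(-,TB)\cong\mathsf{Par}_{\cM}(-,B)$ --- equivalently, by the right adjoint to the graphing functor --- which is how the cited sources set things up; under that reading $T=\widehat T$ holds by construction and the rest of your proof goes through. If instead $T$ is an arbitrary functor satisfying only the axioms as literally stated, a genuinely new idea is needed here (in $\mathbf{Set}$ one can force $T(f)(\star)=\star$ by chasing global points and the initial object, but that argument does not transport to a general $\cM$-category), and your proposal does not supply one.
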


\begin{definition}[\cite{quasi-topos-2007}, Def.~9]
A category $\bfC$ is a \AP\emph{\intro{quasi-topos}} iff
\begin{enumerate}
\item it has finite limits and colimits
\item it is locally Cartesian closed
\item it has a regular-subobject-classifier.
\end{enumerate}
\end{definition}

Based upon a variety of different results from the rich literature on quasi-topoi, we will now exhibit that quasi-topoi indeed possess all technical properties required in order for non-linear SqPO-rewriting to be well-posed:
\begin{corollary}\label{cor:main}
Every \emph{quasi-topos} $\bfC$ enjoys the following properties:
\begin{itemize}
\item It has (by definition) a \kl{stable system of monics} $\cM=\regmono{\bfC}$ (the class of regular monos), which coincides with the class of \AP\emph{\intro{extremal monomorphisms}}~\cite[Cor.~28.6]{adamek2006}, i.e., if $m=f\circ e$ for $m\in \regmono{\bfC}$ and $e\in \epi{\bfC}$, then $e\in \iso{\bfC}$.
\item It has (by definition) a \kl{$\cM$-partial map classifier} $(T,\eta)$.
\item It is \AP\emph{\intro{rm-quasi-adhesive}},  i.e., it has \AP\intro{pushouts along regular monomorphisms}, these are \AP\emph{\intro[rm-pushouts]{stable under pullbacks}}, and \AP\intro{pushouts along regular monos are pullbacks}~\cite{garner2012axioms}.
\item It is \AP\emph{\intro{$\cM$-adhesive}}~\cite[Lem.~13]{10.1007/978-3-642-15928-2_17}.
\item For all pairs of composable morphisms $A\xrightarrow{f}B$ and $B\xrightarrow{m}C$ with $m\in \cM$, there \AP\emph{\intro{exists a final pullback-complement (FPC)}} $A\xrightarrow{n}F\xrightarrow{g}C$, and with $n\in \cM$ (\cite[Thm.~1]{Corradini_2015}; cf.\ Theorem~\ref{thm:FPC}).
\item It possesses an \AP\emph{\intro{epi-$\cM$-factorization}}~\cite[Prob.~28.10]{adamek2006}: each morphism $A\xrightarrow{f}B$ factors as $f=m\circ e$, with morphisms $A\xrightarrow{e}\bar{B}$ in $\epi{\bfC}$ and $\bar{B}\xrightarrow{m}A$ in $\cM$ (uniquely up to isomorphism in $\bar{B}$).
\item It possesses a \AP\emph{\intro{strict initial object}} $\mIO\in \obj{\bfC}$~\cite[A1.4]{johnstone2002sketches}, i.e., for every object $X\in \obj{\bfC}$, there exists a morphism $i_X:\mIO\rightarrow X$, and if there exists a morphism $X\rightarrow \mIO$, then $X\cong \mIO$. 
\end{itemize}
If in addition the strict initial object $\mIO$ is \AP\emph{\intro{$\cM$-initial}}, i.e., if for all objects $X\in \obj{\bfC}$ the unique morphism $i_X:\mIO\rightarrow X$ is in $\cM$, then $\bfC$ has \AP\emph{\intro{disjoint coproducts}}, i.e., for all $X,Y\in \obj{\bfC}$, the pushout of the $\cM$-span $X\leftarrowtail\mIO\rightarrowtail Y$ is $X\rightarrowtail X+Y\leftarrowtail Y$ (cf.\ \cite[Thm.~3.2]{MONRO1986141}, which also states that this condition is equivalent to requiring $\bfC$ to be a \AP\emph{\intro{solid quasi-topos}}), and the coproduct injections are $\cM$-morphisms as well. Finally, if pushouts along regular monos of $\bfC$ are van Kampen, $\bfC$ is a \AP\emph{\intro{rm-adhesive category}}~\cite[Def.~1.1]{garner2012axioms}.
\end{corollary}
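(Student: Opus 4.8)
The plan is to verify the listed properties one at a time, most of them by direct appeal to the cited results, while isolating the two points that genuinely require a short argument: that the regular monomorphisms of a quasi-topos form a stable system of monics, and that the regular-subobject classifier supplies an $\cM$-partial map classifier. I expect the former to be the only real obstacle, since in an arbitrary category the regular monos are neither closed under composition nor pullback-stable.

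First I would fix $\cM := \regmono{\bfC}$ and identify it with the class of extremal (equivalently strong) monomorphisms, which is \cite[Cor.~28.6]{adamek2006}. Extremal monos automatically contain the isomorphisms, are closed under composition, and are stable under pullbacks, so $\cM$ is a stable system of monics; the same corollary yields the factorization characterization in the first bullet ($m = f\circ e$ with $e$ epi forces $e\in\iso{\bfC}$). For the classifier, I would unwind the universal property of the regular-subobject classifier, which is part of the definition of quasi-topos, and observe that it is exactly the data $(T,\eta)$ of the $\cM$-partial map classifier: giving a regular subobject $X\rightarrowtail A$ together with a map $X\to B$ is the same as giving a characteristic morphism $A\to T(B)$ exhibiting the span as a pullback of that morphism with $\eta_B$, and uniqueness is built into the classifier property (cf.\ also \cite{COCKETT2002223,COCKETT200361}).

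With these two items in hand, the remaining unconditional bullets are direct citations. Rm-quasi-adhesiveness --- existence of pushouts along regular monos, their stability under pullbacks, and the fact that such pushouts are pullbacks --- is \cite{garner2012axioms}; $\cM$-adhesiveness is \cite[Lem.~13]{10.1007/978-3-642-15928-2_17}, whose hypotheses (a stable system of monics admitting a partial map classifier) we have just checked; the existence of a final pullback complement along any $\cM$-morphism, with the new leg again in $\cM$, is \cite[Thm.~1]{Corradini_2015} (restated as Theorem~\ref{thm:FPC}), again using the $\cM$-partial map classifier; the epi-$\cM$-factorization is \cite[Prob.~28.10]{adamek2006}; and the strict initial object is \cite[A1.4]{johnstone2002sketches} (in a locally Cartesian closed category with an initial object $\mIO$, any map $X\to\mIO$ exhibits $X$ as the terminal object of the slice over $\mIO$, whence $X\cong\mIO$).

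For the two conditional clauses: assuming $\mIO$ is $\cM$-initial, \cite[Thm.~3.2]{MONRO1986141} gives disjoint coproducts and the stated equivalence with being a solid quasi-topos; the only thing to add is that, by rm-quasi-adhesiveness, the pushout of the $\cM$-span $X\leftarrowtail\mIO\rightarrowtail Y$ is a pullback, and its legs, being pushouts of $\mIO\rightarrowtail Y$ and $\mIO\rightarrowtail X$ along $\cM$-morphisms, lie in $\cM$ by stability of $\cM$ under pushout. Finally, if pushouts along regular monos are in addition van Kampen, then $\bfC$ is rm-adhesive by the very definition \cite[Def.~1.1]{garner2012axioms}, with nothing further to check.
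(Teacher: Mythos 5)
Your proposal takes essentially the same route as the paper, which offers no separate proof of this corollary and justifies each bullet entirely by the inline citations; you invoke the same references and correctly identify the only points needing comment (that $\regmono{\bfC}$ is a stable system of monics, and the $\cM$-initiality addendum for the coproduct injections). Two of your parenthetical glosses are imprecise but harmless since the citations carry the weight: the regular-subobject classifier only directly gives $T(1)$, the construction of $T(B)$ for general $B$ requiring the local cartesian closure as in Cockett--Lack; and strict initiality follows because a map $f:X\rightarrow\mIO$ makes $X\cong f^{*}(\mIO)$ \emph{initial} (not terminal) in $\bfC/X$, pullback functors preserving colimits in a locally cartesian closed category.
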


\subsection{The categories of directed multi- and simple graphs}\label{sec:graphs}

Throughout this paper, we will illustrate our constructions with two prototypical examples of (quasi-)topoi, namely categories of two types of directed graphs.

\begin{definition}\label{def:Graph}
The \AP\intro{category $\mathbf{Graph}$} of \emph{directed multigraphs} is defined as the presheaf category $\mathbf{Graph}:=(\bG^{op}\rightarrow \mathbf{Set})$, where $\bG:=(\cdot \rightrightarrows \star)$ is a category with two objects and two morphisms~\cite{ls2004adhesive}. Objects $G=(V_G,E_G,s_G,t_G)$ of $\mathbf{Graph}$ are given by a set of vertices $V_G$, a set of directed edges $E_G$ and the source and target functions $s_G,t_G:E_G\rightarrow V_G$. Morphisms of $\mathbf{Graph}$ between $G,H\in \obj{\mathbf{Graph}}$ are of the form $\varphi=(\varphi_V,\varphi_E)$, with $\varphi_V:V_G\rightarrow V_H$ and $\varphi_E:E_G\rightarrow E_H$ such that $\varphi_V\circ s_G=s_H\circ \varphi_E$ and $\varphi_V\circ t_G=t_H\circ \varphi_E$.
\end{definition}

\begin{definition}\label{def:SGraph}
The \AP\intro{category $\mathbf{SGraph}$} of \emph{directed simple graphs}\footnote{Some authors prefer to not consider directly the category $\mathbf{BRel}$, but rather define $\mathbf{SGraph}$ as some category equivalent to $\mathbf{BRel}$, where simple graphs are of the form $\langle V,E\rangle$ with $E\subseteq V\times V$. This is evidently equivalent to directly considering $\mathbf{BRel}$, whence we chose to not make this distinction in this paper.} is defined as the category of binary relations $\mathbf{BRel}\cong\mathbf{Set}\git\Delta$~\cite{quasi-topos-2007}. Here, $\Delta:\mathbf{Set}\rightarrow\mathbf{Set}$ is the pullback-preserving diagonal functor defined via $\Delta X:= X\times X$, and $\mathbf{Set}\git\Delta$ denotes the full subcategory of the slice category $\mathbf{Set}/\Delta$ defined via restriction to objects $m:X\rightarrow \Delta X$ that are monomorphisms. More explicitly, an object of $\mathbf{Set}\git\Delta$ is given by $S=(V,E, \iota)$, where $V$ is a set of vertices, $E$ is a set of directed edges, and where $\iota:E\rightarrow V\times V$ is an injective function. A morphism $f=(f_V,f_E)$ between objects $S$ and $S'$ is a pair of functions $f_V:V\rightarrow V'$ and $f_E:E\rightarrow E'$ such that $\iota'\circ f_E=(f_V\times f_V)\circ\iota$ (see~\eqref{eq:epiRMfactorizationSGraph}).
\end{definition}

These two categories satisfy the following well-known properties:
\begin{theorem}
The category $\mathbf{Graph}$ is an \emph{adhesive category} and (by definition) a \emph{presheaf topos}~\cite{ls2004adhesive} (and thus in particular a \kl{quasi-topos}), with strict-initial object $\mIO=(\emptyset, \emptyset,\emptyset\to\emptyset,\emptyset\to\emptyset)$ the empty graph, and with the following additional properties:
\begin{itemize}
\item Morphisms are in the classes $\mono{\mathbf{Graph}}$/$\epi{\mathbf{Graph}}$/$\iso{\mathbf{Graph}}$ if they are component-wise injective/surjective/bijective functions, respectively. All monos in $\mathbf{Graph}$ are regular, and $\mathbf{Graph}$ therefore possesses an epi-mono-factorization.
\item For each $G\in \obj{\mathbf{Graph}}$~\cite[Sec.~2.1]{Corradini_2015}, $\eta_G:G\rightarrow T(G)$ is defined as the embedding of $G$ into $T(G)$, where $T(G)$ is defined as the graph with vertex set $V_G':=V_G\uplus \{\star\}$ and edge set $E_G\uplus E_G'$. Here, $E_G'$ contains one directed edge $e_{n,p}: v_n\rightarrow v_p$ for each pair of vertices $(v_n,v_p)\in V_G'\times V_G'$.
\end{itemize}
The category $\mathbf{SGraph}$ is \emph{not} adhesive, but it is a quasi-topos~\cite{quasi-topos-2007}, and with the following additional properties:
\begin{itemize}
\item In $\mathbf{SGraph}$~\cite{quasi-topos-2007} (compare~\cite[Prop.~9]{BRAATZ2011246}), morphisms $f=(f_V,f_E)$ are monic (epic) if $f_V$ is monic (epic), while isomorphisms satisfy that both $f_V$ and $f_E$ are bijective. \emph{Regular} monomorphisms in $\mathbf{SGraph}$ are those for which $(\iota, f_E)$ is a pullback of $(\Delta(f_V),\iota')$~\cite[Lem.~14(ii)]{quasi-topos-2007}, i.e., a monomorphism is regular iff it is \emph{edge-reflecting}. As is the case for any \kl{quasi-topos}, $\mathbf{SGraph}$ possesses an epi-regular mono-factorization.
\item The regular mono-partial map classifier $(T,\eta)$ of $\mathbf{SGraph}$ is defined as follows~\cite[Ex.~28.2(3)]{adamek2006}: for every object $S=(V,E,\iota)\in \obj{\mathbf{SGraph}}$, 
\begin{equation}
T(S):= (V_{\star}=V\uplus \{\star\}, E_{\star}=E\uplus (V\times\{\star\})\uplus (\{\star\}\times V)\uplus \{(\star,\star)\}, \iota_{\star})\,,
\end{equation}
where $\iota_{\star}$ is the evident inclusion map, and moreover $\eta_{S}:S\rightarrowtail T(S)$ is the (by definition edge-reflecting) inclusion of $S$ into $T(S)$.
\item $\mathbf{SGraph}$ possesses a regular mono-initial object $\mIO=(\emptyset, \emptyset,\emptyset\to\emptyset)$.
\end{itemize}
\end{theorem}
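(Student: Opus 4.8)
The plan is to treat the two categories separately; for each, I first invoke the relevant structural classification---a presheaf topos, respectively a quasi-topos obtained by gluing along a pullback-preserving endofunctor---and then verify the explicit descriptions of the (regular) monos, epis, isos, the initial object, and the partial map classifier directly from the concrete descriptions of objects and morphisms in Definitions~\ref{def:Graph} and~\ref{def:SGraph}. Almost every structural claim is standard and available in the cited literature; the principal computations concern the displayed formulae for $T$ and $\eta$.

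For $\mathbf{Graph}$: being by definition the presheaf category $(\bG^{op}\to\mathbf{Set})$, it is a Grothendieck---hence elementary---topos, so adhesiveness follows since every presheaf topos is adhesive~\cite{ls2004adhesive}, and a fortiori $\mathbf{Graph}$ is a quasi-topos with $\cM$ the class of all monomorphisms, as in a topos every monomorphism is regular. Strictness of the empty graph is a general property of topoi. Because limits and colimits in a presheaf category are computed pointwise, a morphism $\varphi=(\varphi_V,\varphi_E)$ lies in $\mono{\mathbf{Graph}}$, $\epi{\mathbf{Graph}}$, or $\iso{\mathbf{Graph}}$ exactly when both components are injective, surjective, or bijective, and the epi-mono factorization is the componentwise image factorization. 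It then remains to verify the description of $(T,\eta)$: with $\eta_G$ the displayed (visibly monic) inclusion, one checks that for a span $G_1\xleftarrow{m}X\xrightarrow{f}G_2$ with $m$ monic the morphism $\varphi(m,f):G_1\to T(G_2)$ which sends each vertex in the image of $m$ to its $f$-image and every other vertex to $\star$, and which sends an edge $e$ of $G_1$ to its $f$-image in $E_{G_2}$ when $e$ lies in the image of $m$ and otherwise to the unique edge $e_{n,p}\in E_{G_2}'$ spanned by the $\varphi(m,f)$-images of the endpoints of $e$, is the unique $\mathbf{Graph}$-morphism making $(m,f)$ a pullback of $(\varphi(m,f),\eta_{G_2})$; this is a routine diagram chase, also recorded in~\cite[Sec.~2.1]{Corradini_2015}.

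For $\mathbf{SGraph}$: via $\mathbf{SGraph}\cong\mathbf{BRel}\cong\mathbf{Set}\git\Delta$ together with the fact---built into Definition~\ref{def:SGraph}---that $\Delta X=X\times X$ preserves pullbacks, $\mathbf{SGraph}$ is a quasi-topos by the gluing theorem for pullback-preserving endofunctors of a topos~\cite{quasi-topos-2007}; that it is \emph{not} adhesive (in fact not even quasi-adhesive) is established in the same reference. The characterization of monos, epis, and isos follows from the defining identity $\iota'\circ f_E=(f_V\times f_V)\circ\iota$: injectivity of $f_V$ forces $\iota'\circ f_E$, hence $f_E$, to be injective, and conversely $f_V$ must be injective whenever $(f_V,f_E)$ is monic in $\mathbf{SGraph}$, dually for epis~\cite{quasi-topos-2007,BRAATZ2011246}; that the regular monos are exactly the edge-reflecting ones is~\cite[Lem.~14(ii)]{quasi-topos-2007}, and the epi-regular-mono factorization is the general quasi-topos property stated in Corollary~\ref{cor:main}. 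For the partial map classifier the crucial observation is that in a simple graph an edge is determined by its ordered endpoint pair and that, since $\cM$ consists of the edge-reflecting monos, the domain $X$ of an $\cM$-span $S_1\xleftarrow{m}X\xrightarrow{f}S_2$ already contains every edge of $S_1$ both of whose endpoints lie in the image of $m$; hence every edge of $S_1$ outside the image of $m$ has at least one endpoint sent to the adjoined vertex $\star$, which is precisely why $T(S)$ need adjoin only $\star$ together with the edges $(V\times\{\star\})\uplus(\{\star\}\times V)\uplus\{(\star,\star)\}$. Granting this, $\varphi(m,f)$ is forced on vertices and on edges, and the resulting square is a pullback in $\mathbf{SGraph}$; alternatively one cites~\cite[Ex.~28.2(3)]{adamek2006}. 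Finally, $\mIO=(\emptyset,\emptyset,\emptyset\to\emptyset)$ is regular-mono-initial because the unique morphism $\mIO\to S$ is edge-reflecting---the relevant pullback along $\emptyset\to V_S\times V_S$ being empty---and it is strict, since $S\to\mIO$ forces $V_S=\emptyset$ and therefore $E_S=\emptyset$.

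The step I expect to require the most care is the explicit verification of the partial-map-classifier descriptions, in particular for $\mathbf{SGraph}$: one must argue precisely why taking $\cM$ to be the edge-reflecting monos removes any need to adjoin a ``chord'' edge between two old vertices of $T(S)$, and one must check that the forced classifying morphism is a genuine $\mathbf{SGraph}$-morphism (compatible with the injections $\iota$) and that the square it produces is a pullback in $\mathbf{SGraph}$ rather than merely in $\mathbf{Set}$. Everything else reduces to a citation or to a short diagram chase.
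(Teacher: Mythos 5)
Your proposal is correct, and in spirit it mirrors the paper's own proof: both delegate the standard structural facts to the cited literature and verify only a small number of items by hand. The interesting difference is \emph{which} items. The paper's entire proof consists of the remark that ``most of these results are standard'' followed by one explicit construction, namely the epi-regular-mono factorization of $\mathbf{SGraph}$: it takes the epi-mono factorization $f_V=m_V\circ e_V$ of the vertex map in $\mathbf{Set}$, applies the diagonal functor $\Delta$, and pulls back $(\Delta(m_V),\iota')$ to manufacture the intermediate object, the regular-mono leg and the epi leg concretely (cf.~\eqref{eq:epiRMfactorizationSGraph}). This is exactly the step you dispatch by citing the general epi-$\cM$-factorization of a quasi-topos from Corollary~\ref{cor:main} --- which is logically sufficient, but loses the explicit recipe that the paper implicitly reuses later (e.g.\ in the multi-sum construction of Lemma~\ref{lem:mSum}). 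Conversely, the partial map classifiers, on which you concentrate your explicit effort, are handled in the paper purely by citation to \cite{Corradini_2015} and \cite{adamek2006}; your verification is sound, and your observation that edge-reflectingness of $\cM$-monos is precisely what spares $T(S)$ in $\mathbf{SGraph}$ from needing ``chord'' edges between pairs of old vertices --- in contrast to $\mathbf{Graph}$, where $E_G'$ must contain an edge for \emph{every} pair in $V_G'\times V_G'$ because a mono of multigraphs need not reflect edges --- is a genuinely clarifying point that the paper does not spell out. I see no gaps; the only caveat is that if one wanted the concrete form of the $\mathbf{SGraph}$ factorization (and not merely its existence), one would still need to supply the paper's pullback-along-$\Delta$ argument or something equivalent.
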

\begin{proof}
While most of these results are standard, we briefly demonstrate that the epi-regular mono-factorization of $\mathbf{SGraph}$~\cite{quasi-topos-2007} is  ``inherited'' from the epi-mono-factorization of the adhesive category $\mathbf{Set}$. To this end, given an arbitrary morphism $f=(f_V,f_E)$ in $\mathbf{SGraph}$ as on the left of~\eqref{eq:epiRMfactorizationSGraph}, the epi-mono-factorization $f_V=m_V\circ e_V$ lifts via application of the diagonal functor $\Delta$ to a decomposition of the morphism $f_V\times f_V$. Pulling back $(\Delta(m_v),\iota')$ results in a span $(\tilde{\iota},f_E'')$ and (by the \kl{universal property of pullbacks}) an induced morphism $f_E'$ that makes the diagram commute. By stability of monomorphisms under pullbacks, $\tilde{\iota}$ is a monomorphism, thus the square marked $(*)$ precisely constitutes the data of a regular monomorphism in $\mathbf{SGraph}$, while the square marked $(\dag)$ is an epimorphism in $\mathbf{SGraph}$ (since $e_V\in \epi{\mathbf{Set}}$).
\end{proof}

\begin{equation}\label{eq:epiRMfactorizationSGraph}
\ti{epiRMfactorizationSGraph}
\end{equation}

\subsection{FPCs, $\cM$-multi-POCs, $\cM$-multi-sums and FPAs}\label{sec:special}

Compared to compositional SqPO-type rewriting for $\cM$-linear rules~\cite{nbSqPO2019}, in the generic SqPO-type setting we require both a generalization of the concept of pushout complements that forgoes uniqueness, as well as a certain form of FPC-augmentation. To this end, it will prove useful to recall from~\cite{Corradini_2015} the following constructive result:

\begin{theorem}[\cite{Corradini_2015}, Thm.~1]\label{thm:FPC}
For a category $\bfC$ with \kl{$\cM$-partial map classifier} $(T,\eta)$, the \kl{final pullback complement (FPC)} of a composable sequence of arrows $A\xrightarrow{f}B$ and $B\xrightarrow{m}C$ with $m\in \cM$ is guaranteed to exist, and is constructed via the following algorithm:
\begin{enumerate}
\item Let $\bar{m}:=\varphi(m,id_B)$ (i.e., the morphism that exists by the universal property of $(T,\eta)$, cf.\ square $(1)$ below).
\item Construct $T(A)\xleftarrow{\bar{n}}F\xrightarrow{g}C$ as the pullback of $T(A)\xrightarrow{T(f)}T(B)\xleftarrow{\bar{m}}C$ (cf.\ square $(2)$ below); by the \kl{universal property of pullbacks}, this in addition entails the existence of a morphism $A\xrightarrow{n}F$.
\end{enumerate}
Then $(n,g)$ is the FPC of $(f,m)$, and $n$ is in $\cM$.
\end{theorem}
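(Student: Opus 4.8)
The plan is to check the two defining clauses of a \kl{final pullback complement (FPC)} for the composable pair $A\xrightarrow{f}B\xrightarrow{m}C$ — that the square built from $n$ and $g$ commutes and is a pullback, and that it enjoys the appropriate universal property — together with the side condition $n\in\cM$. Throughout I would work with the two auxiliary squares produced by the algorithm: square $(1)$, in which $\bar{m}=\varphi(m,id_B)$ makes $(m,id_B)$ a pullback of $(\bar{m},\eta_B)$, and square $(2)$, in which $F$ with legs $\bar{n}:F\to T(A)$ and $g:F\to C$ is the pullback of $T(f)$ along $\bar{m}$. From square $(1)$ one extracts the identity $\bar{m}\circ m=\eta_B$, and together with naturality of $\eta$ this gives $T(f)\circ\eta_A=\eta_B\circ f=\bar{m}\circ(m\circ f)$; the \kl{universal property of pullbacks} applied to square $(2)$ then yields the morphism $n:A\to F$ with $\bar{n}\circ n=\eta_A$ and $g\circ n=m\circ f$, so in particular the candidate square with edges $f,n,m,g$ commutes.

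To see that this square is a pullback, I would stack it on top of square $(2)$ along their shared edge $g:F\to C$. The two composite vertical legs are $\bar{n}\circ n=\eta_A$ and $\bar{m}\circ m=\eta_B$, so the resulting outer rectangle is precisely the square with edges $f,\eta_A,\eta_B,T(f)$, which is a pullback because $\eta$ is Cartesian. Since square $(2)$ is a pullback, \kl{pullback-pullback decomposition} forces the candidate square to be one as well. Consequently $n$ is a pullback of $m\in\cM$ (along $g$), hence $n\in\cM$ by \kl{stability of $\cM$-morphisms under pullbacks}.

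For the universal property I would verify the following (from which the ``largest pullback complement'' formulation of the introduction is the special case $X=A$, $p=f$, $a=id_A$): given any pullback square with edges $p:X\to B$, $q:X\to Y$, $r:Y\to C$ and right leg $m$, and any $a:X\to A$ with $f\circ a=p$, there is a unique $b:Y\to F$ with $g\circ b=r$ and $b\circ q=n\circ a$. The key observation is that $q$, being a pullback of $m\in\cM$, lies in $\cM$, so the span $Y\xleftarrow{q}X\xrightarrow{a}A$ is classified by a unique $\varphi(q,a):Y\to T(A)$ with $(q,a)$ a pullback of $(\varphi(q,a),\eta_A)$. Pasting this square with the Cartesian-$\eta$ square of $f$ exhibits $(q,p)$ as a pullback of $(T(f)\circ\varphi(q,a),\eta_B)$, while pasting square $(1)$ with the given square exhibits the same $(q,p)$ as a pullback of $(\bar{m}\circ r,\eta_B)$; uniqueness of classifying maps into $T(B)$ then gives $T(f)\circ\varphi(q,a)=\bar{m}\circ r$. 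Hence $(\varphi(q,a),r)$ is a cone over the cospan defining $F$, producing $b:Y\to F$ with $\bar{n}\circ b=\varphi(q,a)$ and $g\circ b=r$, and $b\circ q=n\circ a$ follows by postcomposing with the jointly monic pair $(\bar{n},g)$. For uniqueness, given a competitor $b'$ I would show that $(q,a)$ is again a pullback of $(\bar{n}\circ b',\eta_A)$ — verifying that pullback's universal property using that $(1)$ and the given square are pullbacks and that $\eta_A$ is monic — so that $\bar{n}\circ b'=\varphi(q,a)=\bar{n}\circ b$, and then $b'=b$ since $(\bar{n},g)$ is jointly monic.

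I expect the main obstacle to be the organization of the universal-property argument, and especially the uniqueness of $b$: every individual step is an instance of either the pullback-pasting lemma or the existence-and-uniqueness clause of the universal property of the \kl{$\cM$-partial map classifier} — invoked now over $T(A)$, now over $T(B)$ — so the real work is in keeping the relevant spans, cospans and cones straight rather than in any single nontrivial idea. A minor preliminary point is to record that $\cM$-maps are monic and stable under pullbacks, which is what guarantees that all the classifying maps $\varphi(q,a)$ used above actually exist.
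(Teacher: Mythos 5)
Your proposal is correct. Note that the paper does not prove this statement itself but imports it as \cite[Thm.~1]{Corradini_2015}; your argument---deriving the candidate square as a pullback by pasting with the Cartesian naturality square of $\eta$ at $f$, getting $n\in\cM$ by stability under pullbacks, and establishing the finality clause (including uniqueness of the mediating arrow) from the uniqueness of classifying maps into $T(A)$ and $T(B)$ together with joint monicity of the pullback projections $(\bar{n},g)$---is essentially the standard proof of that cited result, and all the steps you sketch (in particular the uniqueness argument via squares $(1)$, the given pullback, and monicity of $\eta_A$) go through as stated.
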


\begin{equation}\label{eq:M-FPC}
\ti{FPC}
\end{equation}

This guarantee for the existence of FPCs will prove quintessential for constructing $\cM$-multi-pushout complements, which are defined as follows:
\begin{definition}
For a category $\bfC$ with an \kl{$\cM$-partial map classifier}, the \emph{\intro{$\cM$-multi-pushout complement}} ($\cM$-multi-POC) $\mPOC{f}{b}$ of a composable sequence of morphisms $A\xrightarrow{f}B$ and $B\xrightarrow{b}D$ with $b\in \cM$ is defined as
\begin{equation}
\mPOC{f}{b}:= \{(A\xrightarrow{a}P, P\xrightarrow{d}D)\in \mor{\bfC}^{2}\mid a\in \cM \land (d,b) =\pO{a,f}\}\,.
\end{equation}
\end{definition}

\begin{proposition}
In a quasi-topos $\bfC$ and for $\cM=\regmono{\bfC}$ the class of regular monomorphisms, let $\mPOC{f}{b}$ be an \kl{$\cM$-multi-POC}.
\begin{itemize}
\item \AP\textbf{\intro{Universal property of $\mPOC{f}{b}$}}: for every diagram such as in~\eqref{eq:M-FPC-PO-augmentation-POCs}(i) where $(1)+(2)$ is a pushout along an $\cM$-morphism $n$, and where $m=m'\circ b$ for some $m',b\in \cM$, there exists an element $(a,d)$ of $\mPOC{f}{b}$ and an $\cM$-morphism $p\in \cM$ such that the diagram commutes and $(2)$ is a pushout. Moreover, for any $p'\in \cM$ and for any other element $(a',d')$of $\mPOC{f}{b}$ with the same property, there exists an isomorphism $\delta\in \iso{\bfC}$ such that $\delta\circ a=a'$ and $d'\circ \delta =d$.
\item \AP\textbf{\intro{Algorithm to compute $\mPOC{f}{b}$}}:
\begin{enumerate}
\item Construct $(n,g)$ in diagram~\eqref{eq:M-FPC-PO-augmentation-POCs}(ii) by taking the FPC of $(f,b)$. 
\item For every pair of morphisms $(a,p)$ such that $a\in \cM$ and $a\circ p=n$, take the pushout $(1)$, which by \kl{universal property of pushouts} induces an arrow $D\xrightarrow{e}C$; if $e\in \iso{\bfC}$, $(a,d)$ is a contribution to the \kl{$\cM$-multi-POC} of $(f,b)$.
\end{enumerate}
\end{itemize}
\begin{equation}\label{eq:M-FPC-PO-augmentation-POCs}
\ti{M-FPC-PO-augmentation-POCs}
\end{equation}
\end{proposition}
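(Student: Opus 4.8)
The plan is to take $\mPOC{f}{b}$ at face value: it is the collection of all \emph{pushout complements} $A\xrightarrow{a}P\xrightarrow{d}D$ of the composable pair $A\xrightarrow{f}B\xrightarrow{b}D$ whose attaching morphism $a$ is a regular monomorphism. The proof then rests on two structural features of a quasi-topos. First, \emph{pushouts along regular monomorphisms are pullbacks}, so every element of $\mPOC{f}{b}$ is simultaneously a \emph{pullback} complement of $(f,b)$ and hence, by Theorem~\ref{thm:FPC} and the \emph{universal property of the FPC} $(n,g)$ of $(f,b)$, admits a unique comparison morphism $p\colon P\to F$ with $p\circ a=n$ and $g\circ p=d$. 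Second, the \emph{pushout--pullback} and \emph{pushout--pushout} decomposition lemmas, together with \emph{stability of regular monos under pullbacks} and the \emph{decomposition property of $\cM$-morphisms}, let me pass freely between the ``pushout reading'' and the ``pullback reading'' of the squares in~\eqref{eq:M-FPC-PO-augmentation-POCs} while keeping track of which arrows are regular.

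For the \emph{Algorithm}, soundness is nearly tautological: when a pair is emitted we have $a\in\cM$ by construction, and the induced comparison $e$ between $D$ and the pushout $C$ of $(a,f)$ being an isomorphism exhibits $D$ as that pushout, i.e.\ $(a,d)\in\mPOC{f}{b}$. For completeness I would start from an arbitrary $(a,d)\in\mPOC{f}{b}$: since $a\in\cM$ its defining pushout is also a pullback, so $(a,d)$ is a pullback complement and the FPC universal property delivers the factorization $n=p\circ a$ that step~$2$ enumerates; the comparison $e$ for this factorization is an isomorphism precisely because $(a,d)$ was already a pushout complement. Hence the algorithm outputs exactly $\mPOC{f}{b}$.

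For the \emph{Universal property}, existence: the outer square $(1)+(2)$ of~\eqref{eq:M-FPC-PO-augmentation-POCs}(i) is a pushout along the regular mono $n$, hence also a pullback. I would interpolate the ``input-side'' object $P$ by pulling $m'$ back along the bottom edge of the outer square, obtaining $d\colon P\to D$ and $p\colon P\to F'$ (writing $F'$, $O$ for the remaining lower objects); since $m'\in\cM$, \emph{stability of regular monos under pullbacks} gives $p\in\cM$, and square $(2)$ is a pullback by construction. By pullback pasting the outer pullback factors through $P$, so square $(1)$ is a pullback; the induced morphism $a\colon A\to P$ then satisfies $d\circ a=b\circ f$ and $p\circ a=n$, and being a pullback of $b\in\cM$ it lies in $\cM$. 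Feeding $(1)+(2)$ (a pushout) and $(2)$ (a pullback, with $m'$ and $p$ supplying the regularity hypotheses) to the \emph{pushout--pullback decomposition} yields that $(1)$ is a pushout, and then the \emph{pushout--pushout decomposition} yields that $(2)$ is a pushout; since $a\in\cM$ this says $(a,d)\in\mPOC{f}{b}$, while all the required commutativities hold by the universal properties just invoked. For uniqueness, given a second completion $(a',d')$, $p'$ with $(2')$ a pushout: as $p'\in\cM$, $(2')$ is a pushout along a regular mono, hence a pullback, hence canonically isomorphic over $D$ and $F'$ to the pullback of $m'$ along the bottom edge, i.e.\ to the square $(2)$ constructed above; the unique isomorphism $\delta$ of the interpolated objects produced by the universal property of pullbacks then satisfies $\delta\circ a=a'$ and $d'\circ\delta=d$.

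The step I expect to be the main obstacle is the ``pushout versus pullback'' bookkeeping under the regularity constraints: a quasi-topos need not be adhesive, so the two cannot be interchanged, and every application of a decomposition lemma must be supplied a regular mono on the correct edge. Establishing that $p$ and $a$ are regular --- which is the abstract shadow of the ``backpropagation effect'', the input-side attaching morphism being reconstructed by pulling back output-side data, and which is exactly where \emph{stability of regular monos under pullbacks} and, in the finer parts, the \emph{decomposition property of $\cM$-morphisms} enter --- is the delicate point; beyond it the argument is diagram chasing with the universal properties already furnished by Theorem~\ref{thm:FPC} and Corollary~\ref{cor:main}.
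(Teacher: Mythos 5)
Your proposal is correct and follows essentially the same route as the paper's proof: for the universal property you interpolate $P$ by pulling $m'$ back along the bottom of the outer pushout, obtain $a$ from the universal property of pullbacks, get $a,p\in\cM$ by stability under pullbacks, and conclude via pushout--pullback (de)composition; for the algorithm you use that pushouts along $\cM$-morphisms are pullbacks together with the universal property of the FPC $(n,g)$ to show every element of $\mPOC{f}{b}$ arises from a factorization of $n$, with soundness immediate from the comparison isomorphism $e$. The only (harmless) deviations are that you split the final step into a pushout--pullback plus a pushout--pushout decomposition where the paper's lemma delivers both squares at once, and that you spell out the uniqueness clause, which the paper only notes in passing.
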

\begin{proof}
The universal property of $\mPOC{f}{b}$ follows from \kl{pushout-pullback decomposition}: \kl{pushouts along $\cM$-morphisms are pullbacks}, so $(1)+(2)$ is a pullback; taking the pullback $(p,d)$ of $(q,m')$ yields by the \kl{universal property of pullbacks} a morphism $a$ (which is unique up to isomorphism), and thus by \kl{pullback-pullback decomposition} that $(1)$ and $(2)$ are pullbacks. By \kl{stability of $\cM$-morphisms under pullbacks}, both $a$ and $p$ are in $\cM$, and finally by \kl{pushout-pullback decomposition}, both $(1)$ and $(2)$ are pushouts. This proves that $(a,d)$ is in $\mPOC{f}{b}$.

To prove that the algorithm provided indeed computes $\mPOC{f}{b}$, note first that by the \kl{universal property of FPCs}, whenever in a diagram as in~\eqref{eq:M-FPC-PO-augmentation-POCs}(ii) we have that $D\cong C$ and $b\in \cM$, since \kl{pushouts along $\cM$-morphisms are pullbacks}, square $(1)$ is a pullback, which entails by the \kl{universal property of FPCs} that there exists a morphism $p$ such that $p\circ a=n$. By \kl{stability of $\cM$-morphisms under pullbacks}, we find that $a$ must be in $\cM$, so indeed every possible contribution to $\mPOC{f}{b}$ must give rise to a diagram as in~\eqref{eq:M-FPC-PO-augmentation-POCs}(ii), which proves the claim. 
\end{proof}

An example of an $\cM$-multi-POC construction both in $\mathbf{SGraph}$ and in $\mathbf{Graph}$ is given in the diagram below. Note that in $\mathbf{Graph}$, the $\cM$-multi-POC does not contain the FPC contribution (since in $\mathbf{Graph}$ the pushout of the relevant span would yield to a graph with a multi-edge).
\begin{equation}
\vcenter{\hbox{\includegraphics[scale=0.3]{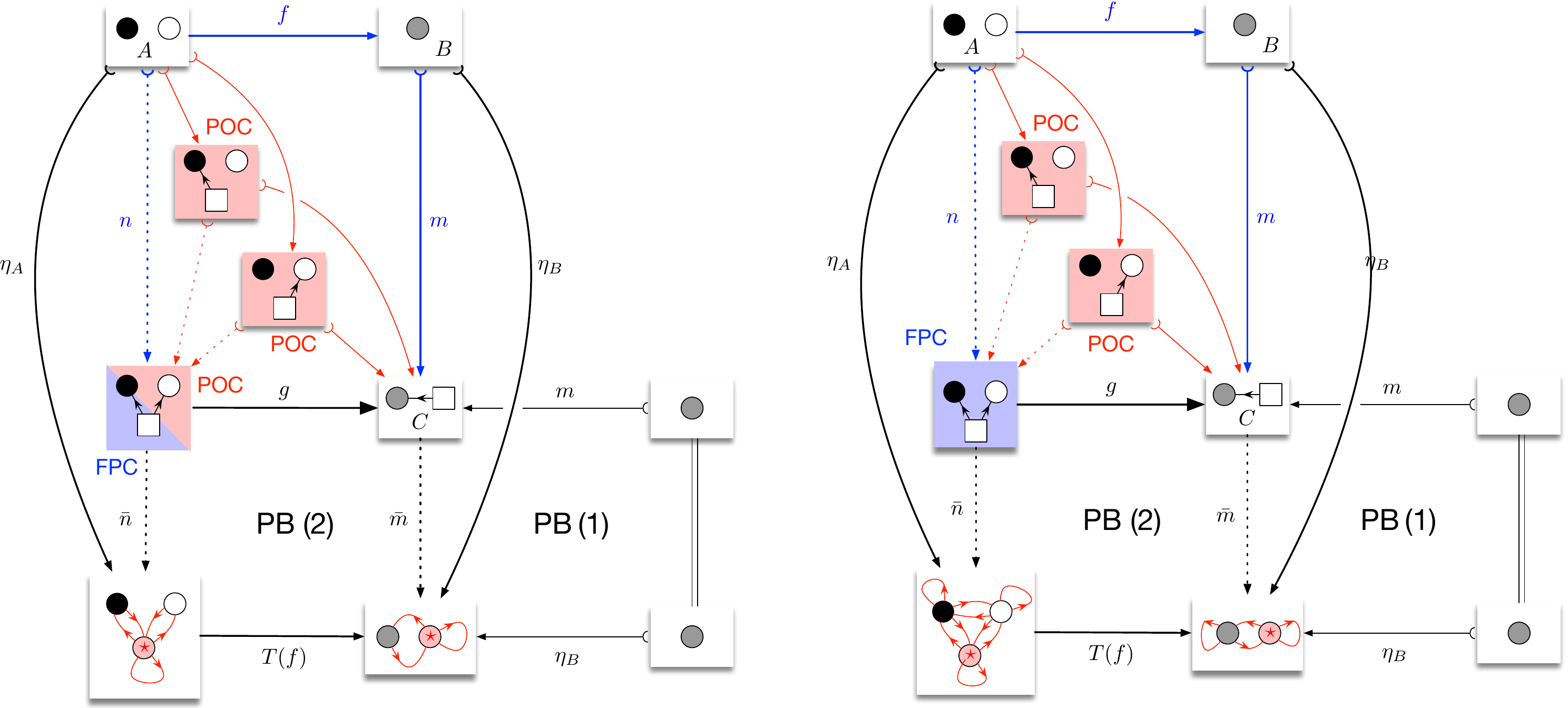}}}
\end{equation}

\begin{definition}[$\cM$-FPC-augmentations]
In a quasi-topos\footnote{As demonstrated in~\cite[Fact~3.4]{GABRIEL_2014}, every finitary $\cM$-adhesive category $\bfC$ possesses an (extremal $\cE$, $\cM$)-factorization, so if $\bfC$ is known to possess FPCs as required by the construction, this might allow to generalize the $\cM$-FPC-PO-augmentation construction to this setting.} $\bfC$ with $\cM=\regmono{\bfC}$, consider a pushout square along an $\cM$-morphism such as square $(1)$ in the diagram below (where $\alpha,\bar{\alpha}\in \cM$):
\begin{equation}\label{eq:defFPA}
\ti{defFPA}
\end{equation}
We define an \AP \intro{$\cM$-FPC augmentation (FPA)} of the pushout square $(1)$ as a diagram formed from an epimorphism $e\in \epi{\bfC}$ and that satisfies the following properties:
\begin{itemize}
\item The morphism $e\circ \bar{\alpha}$ is an $\cM$-morphism.
\item $(\bar{\alpha}, id_B)$ is a pullback of $(e,e\circ \bar{\alpha})$.
\item Square $(1)+(2)$ is an FPC, and the induced morphism $n$ that exists\footnote{Note that square $(1)$ pasted with the pullback square formed by the morphisms $\overline{\alpha}, id_B,e,e\circ \bar{\alpha}$ yields a pullback square that is indeed of the right form to warrant the existence of a morphism $n$ into the FPC square $(1)+(2)$.} by the \kl{universal property of FPCs}, here w.r.t.\ the FPC $(n\circ \alpha,f)$ of $(a,e\circ \bar{\alpha})$, is an $\cM$-morphism.
\end{itemize}
For a pushout as in $(1)$, we denote by $\FPA{\alpha}{a}$ its \AP\emph{\intro{class of FPAs}}:
\begin{equation}
\FPA{\alpha}{a}:=\{ (n,f,e)\mid  e\in \epi{\bfC} \land e\circ \bar{\alpha},n\in \cM\land (f,n\circ \alpha)=FPC(a,e\circ\bar{\alpha})\}
\end{equation}
As induced by the properties of pushouts and of FPCs, FPAs are defined up to universal isomorphisms (in $D$, $E$ and $F$), and for a given pushout square there will in general exist multiple non-isomorphic such augmentations.
\end{definition}

The final technical ingredient for our rewriting theoretic constructions is a notion of \emph{multi-sum} adapted to the setting of quasi-topoi,  a variation on the general theory of multi-(co-)limits due to Diers~\cite{diers1978familles}.

\begin{definition}\label{def:multisum}
In a quasi-topos $\bfC$, the \AP\emph{\intro{multi-sum}} $\Msum{A}{B}$ of two objects $A,B\in \obj{\bfC}$ is defined as a family of cospans of regular monomorphisms $A\xrightarrow{f}Y\xleftarrow{g}B$ with the following \AP\emph{\intro[multi-sum]{universal property}}: for every cospan $A\xrightarrow{a}Z\xleftarrow{b}B$ with $a,b\in \regmono{\bfC}$, there exists an element $A\xrightarrow{f}Y\xleftarrow{g}B$ in $\Msum{A}{B}$ and a regular monomorphism $Y\xrightarrow{y} Z$ such that $a=y\circ f$ and $b=y\circ g$, and moreover $(f,g)$ as well as $y$ are unique up to universal isomorphisms.
\end{definition}

\begin{equation}\label{eq:multiSumDef}
\ti{multiSumDef}
\end{equation}

\begin{lemma}\label{lem:mSum}
If $\bfC$ is a \kl{quasi-topos}, the multi-sum $\Msum{A}{B}$ arises from the \kl{epi-$\cM$-factorization} of $\bfC$ (for $\cM=\regmono{\bfC}$; compare~\cite{harmer2020knowledge}). 
\begin{itemize}
\item \textbf{Existence:} Let $A\xrightarrow{in_A}A+B\xleftarrow{in_B}B$ be the disjoint union of $A$ and $B$. Then for any cospan $A\xrightarrow{a}Z\xleftarrow{b}B$ with $a,b\in \cM$, the epi-$\cM$-factorization of the induced arrow $A+B\xrightarrow{[a,b]}Z$ into an epimorphism $A+B\xrightarrow{e}Y$ and an $\cM$-morphism $Y\xrightarrow{m}Z$ yields a cospan $(y_A=e\circ in_A,y_B=e\circ in_B)$, which by the \kl{decomposition property of $\cM$-morphisms} is a cospan of $\cM$-morphisms (cf.~\eqref{eq:multiSumDef}(i)). 
\item \textbf{Construction:} For objects $A,B\in \obj{\bfC}$, every element $A\xrightarrow{q_A}Q\xleftarrow{q_B}B$ in $\Msum{A}{B}$ is obtained from a pushout of some span $A\xleftarrow{x_A}X\xrightarrow{x_B}B$ with $x_A,x_B\in \cM$ and a morphism $P\xrightarrow{q}Q$ in $\mono{\bfC}\cap\epi{\bfC}$ (cf. \eqref{eq:multiSumDef}(ii)).
\end{itemize}
\end{lemma}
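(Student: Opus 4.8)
The plan is to read the multi-sum directly off the epi-$\cM$-factorization of $\bfC$ (with $\cM=\regmono{\bfC}$). I would fix a cospan $A\xrightarrow{a}Z\xleftarrow{b}B$ with $a,b\in\cM$, form the copairing $[a,b]:A+B\to Z$, and take its epi-$\cM$-factorization $A+B\xrightarrow{e}Y\xrightarrow{m}Z$. Setting $y_A:=e\circ in_A$ and $y_B:=e\circ in_B$, the identities $m\circ y_A=[a,b]\circ in_A=a$ and $m\circ y_B=b$, together with $m\in\cM$, force $y_A,y_B\in\cM$ by the decomposition property of $\cM$-morphisms; hence $(y_A,y_B)$ is a cospan of regular monos and $y:=m$ supplies the factorization $a=y\circ y_A$, $b=y\circ y_B$ demanded by the universal property of multi-sums. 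For essential uniqueness I would use the observation that any cospan $A\xrightarrow{f}Y'\xleftarrow{g}B$ produced by this recipe satisfies $[f,g]=e'$, the epi part of the corresponding factorization, and is therefore an epimorphism; so whenever such a family member $(f,g)$ together with a regular mono $y':Y'\to Z$ satisfies $a=y'\circ f$ and $b=y'\circ g$, the equality $[a,b]=y'\circ[f,g]$ is an epi-$\cM$-factorization of $[a,b]$, and the essential uniqueness of such factorizations yields an isomorphism $\delta$ with $\delta\circ y_A=f$, $\delta\circ y_B=g$, $y'\circ\delta=m$. Thus the family of all cospans obtained this way is $\Msum{A}{B}$.

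\textbf{Construction.} Conversely, let $A\xrightarrow{q_A}Q\xleftarrow{q_B}B$ be a member of $\Msum{A}{B}$; by the first part we may take it to be $q_A=e\circ in_A$, $q_B=e\circ in_B$ for an epimorphism $e=[q_A,q_B]:A+B\twoheadrightarrow Q$ with $q_A,q_B\in\cM$. I would then let $A\xleftarrow{x_A}X\xrightarrow{x_B}B$ be the pullback of $(q_A,q_B)$, so that $x_A,x_B\in\cM$ by stability of $\cM$-morphisms under pullbacks, and form the pushout $A\xrightarrow{p_A}P\xleftarrow{p_B}B$ of $(x_A,x_B)$; since $x_A,x_B$ are regular monos, $p_A,p_B\in\cM$ (regular monos are stable under pushout) and the resulting square is also a pullback (pushouts along regular monos are pullbacks). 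As $q_A\circ x_A=q_B\circ x_B$, the universal property of pushouts induces a unique $q:P\to Q$ with $q\circ p_A=q_A$ and $q\circ p_B=q_B$; then $q\circ[p_A,p_B]=[q_A,q_B]=e$ is an epimorphism, whence $q$ is an epimorphism. It remains to check that $q$ is a monomorphism --- precisely the statement that the union of the regular subobjects $q_A,q_B$ of $Q$, formed as the pushout over their intersection $X$, embeds into $Q$, i.e.\ the effectiveness of binary unions of regular subobjects, which holds in every quasi-topos. Hence $q\in\mono{\bfC}\cap\epi{\bfC}$, and $(x_A,x_B)$ together with $q$ realise $(q_A,q_B)$ in the claimed form.

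\textbf{Main obstacle.} The only step carrying real content is the monomorphy of the comparison arrow $q:P\to Q$ in the construction part, which is where the full quasi-topos structure (rather than merely rm-quasiadhesivity) is needed: it cannot be read off from the joint epimorphy of $(q_A,q_B)$ --- in the non-balanced category $\mathbf{SGraph}$ the apex $Q$ of a multi-sum element may carry edges between the vertices of $P$ that do not lift along $q$, so $q$ is typically a non-invertible monic epimorphism --- and I would therefore isolate it as an instance of the effectiveness of binary unions of regular subobjects, invoking that as the key categorical input, with everything else reducing to routine manipulations of epi-$\cM$-factorizations and of pushouts and pullbacks along regular monos.
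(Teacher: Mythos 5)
Your proof is correct, and while it lands on the same two key inputs as the paper's argument---the decomposition property of $\cM$-morphisms for Existence, and the effectiveness of binary unions of regular subobjects for the monomorphy of the comparison map---your route through the Construction part is genuinely different and more economical. The paper starts from an arbitrary span of $\cM$-morphisms, forms its pushout $P$, and must then prove from scratch that the copairing $[p_A,p_B]:A+B\to P$ is an epimorphism; this occupies most of its appendix proof and invokes the strict initial object, disjointness of coproducts, and stability of stable pushouts under pullbacks, after which uniqueness of epi-$\cM$-factorizations identifies the $\cM$-image of $P\to Z$ with the multi-sum apex, and a further pullback computation checks that the intersection of the resulting cospan is again the original span so that effective unions apply. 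You instead work backwards from the multi-sum element: taking $X$ to be the pullback of $(q_A,q_B)$ makes the effective-unions proposition apply verbatim (no comparison of $X$ with a recomputed intersection needed), and epimorphy of $q$ falls out of the factorization $q\circ[p_A,p_B]=[q_A,q_B]=e$ of a map already known to be epi, since the element arises---up to the universal isomorphisms---from an epi-$\cM$-factorization. What you forgo is the paper's slightly more general intermediate fact (that the pushout copairing of \emph{any} $\cM$-span is epi, which is what lets the paper phrase the construction as starting from an arbitrary span); what you gain is a shorter argument that bypasses the strict-initial-object and disjoint-coproduct machinery entirely. Your explicit verification of the universal property in the Existence part via essential uniqueness of epi-$\cM$-factorizations is also sound, and is in fact spelled out more fully than in the paper, which only addresses membership of $y_A,y_B$ in $\cM$.
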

\begin{proof}
See Appendix~\ref{app:multiSumLemmaProof}.
\end{proof}

\begin{wrapfigure}[11]{R}{0.23\textwidth}
\vspace{-2em}
    \includegraphics[scale=0.4]{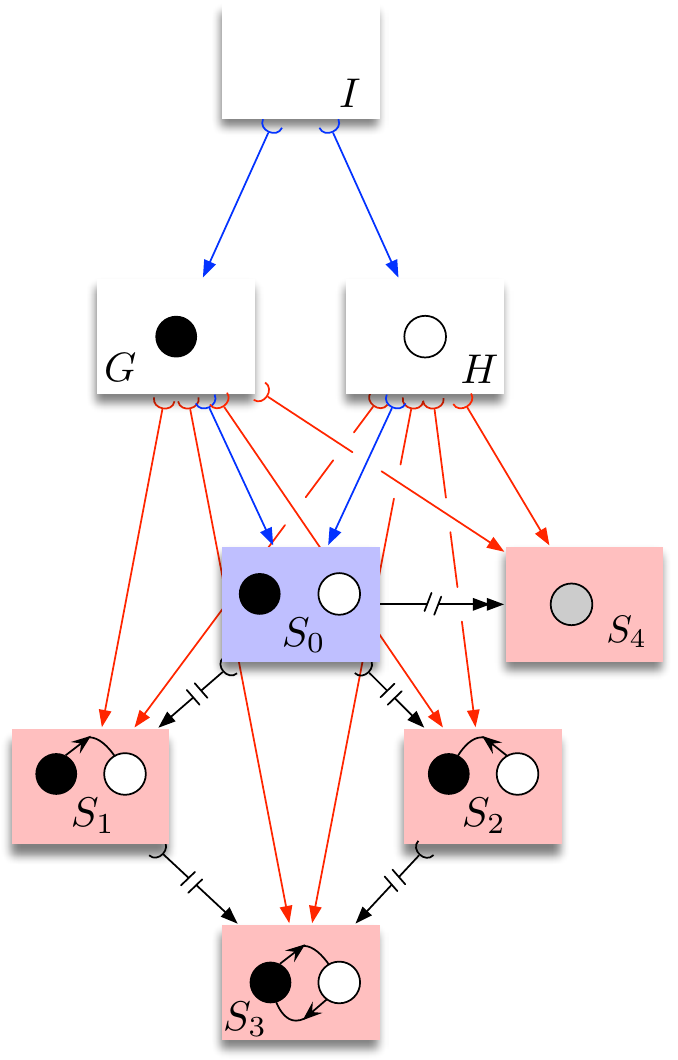}
\end{wrapfigure}
Since in an adhesive category all monos are regular~\cite{ls2004adhesive}, in this case the multi-sum construction simplifies to the statement that every monic cospan can be uniquely factorized as a cospan obtained as the pushout of a monic span composed with a monomorphism. It is however worthwhile emphasizing that for generic quasi-topoi $\bfC$ one may have $\cM\neq \mono{\bfC}$, as is the case in particular for the quasi-topos $\mathbf{SGraph}$ of simple graphs. We illustrate this phenomenon in the diagram on the right via presenting the multi-sum construction for $A=B=\bullet$. Note in particular the monic-epis that extend the two-vertex graph $S_0$ into the graphs $S_1$, $S_2$ and $S_3$, all of which have the same vertices as $S_0$ (recalling that a morphism in $\mathbf{SGraph}$ is monic/epic if it is so on vertices), yet additional edges, so that in particular none of the morphisms $S_0\rightarrow S_j$ for $j=1,2,3$ is edge-reflecting.

\section{Non-linear sesqui- and double-pushout rewriting}\label{sec:nlSqPO}

In much of the traditional work on graph- and categorical rewriting theories~\cite{ehrig:2006fund}, while it was appreciated early in its development that in particular SqPO-rewriting permits the cloning of subgraphs~\cite{Corradini_2006}, and that both SqPO- and DPO-semantics permit the fusion of subgraphs (i.e. via input-linear, but output-non-linear rules), the non-uniqueness of pushout complements along non-monic morphisms for the DPO- and the lack of a concurrency theorem in the SqPO-case in general has prohibited a detailed development of non-linear rewriting theories to date. Interestingly, the SqPO-type concurrency theorem for linear rules as developed in~\cite{nbSqPO2019} exhibits the same obstacle for the generalization to non-linear rewriting as the DPO-type concurrency theorem, i.e., the non-uniqueness of certain pushout complements. %
Our proof for non-linear rules identifies in addition a new and highly non-trivial  ``backpropagation effect'', which will be highlighted in Section~\ref{sec:cct}. It may be worthwhile emphasizing that there exists previous work that aimed at circumventing some of the technical obstacles of non-linear rewriting either via specializing the semantics e.g.\ from double pushout to a version based upon so-called \emph{minimal} pushout complements~\cite{BRAATZ2011246}, or from sesqui-pushout to \emph{reversible} SqPO-semantics~\cite{reversibleSqPO,harmer2020reversibility} or other variants such as AGREE-rewriting~\cite{Corradini_2015}. In contrast, we will in the following introduce the ``true'' extensions of both SqPO- and DPO-rewriting to the non-linear setting, with our constructions based upon multi-sums, multi-POCs and FPAs.

\begin{definition}\label{def:NLRSsqpo}
\AP\emph{\intro{General SqPO-rewriting semantics}} over a \kl{quasi-topos} $\bfC$:
\begin{itemize}
\item The \AP\emph{\intro{set of SqPO-admissible matches}} of a rule \emph{rule} $r=(O\leftarrow K\rightarrow I)\in span(\bfC)$ into an object $X\in \obj{\bfC}$ is defined as
\begin{equation}
\MatchGT{SqPO}{r}{X}:= \{ I\xrightarrow{m}X\mid m\in \regmono{\bfC}\}\,.
\end{equation}
A \AP\emph{\intro{SqPO-type direct derivation}}\footnote{Note that this part of the definition of general SqPO-semantics coincides precisely with the original definition of~\cite{Corradini_2006}.} of $X\in \obj{\bfC}$ with rule $r$ along $m\in \MatchGT{SqPO}{r}{X}$ is defined as a diagram in~\eqref{eq:defNLRSdd}, where $(1)$ is formed as an FPC, while $(2)$ is formed as a pushout.
\begin{equation}\label{eq:defNLRSdd}
\ti{defNLRSdd}
\end{equation}
\item The \AP\emph{\intro{set of SqPO-type admissible matches of rules}}  $r_2,r_1\in span(\bfC)$ (also referred to in the literature as \emph{dependency relations}) is defined as
\begin{equation}
\begin{aligned}
\RMatchGT{SqPO}{r_2}{r_1}&:= \{
(j_2,j_1,\bar{j_1}, \bar{o}_1,\bar{\bar{j}}_1,\bar{\bar{i}}_1, \iota_{21})\mid \\
&\quad\quad (j_2,j_1)\in \Msum{I_2}{O_1} \land (\bar{j}_1, \bar{o}_1)\in \mPOC{o_1}{j_1}\\
&\quad\quad \land (\bar{\bar{j}}_1,\bar{\bar{i}}_1, \iota_{21})\in \FPA{\bar{j}_1}{i_1}
\}\diagup_{\sim}\,,
\end{aligned}
\end{equation}
where equivalence is defined up to the compatible universal isomorphisms of multi-sums, multi-POCs and FPAs (see below).
\item An \AP\emph{\intro{SqPO-type rule composition}} of two general rules $r_1,r_2\in span(\bfC)$ along an admissible match $\mu\in \RMatchGT{SqPO}{r_2}{r_1}$ is defined via a diagram as in~\eqref{eq:defNLRSrc} below, where (going column-wise from the left) squares $(2_2)$, $(6)$, and $(4)$ are pushouts, $(1_1)$ is the multi-POC element specified as part of the data of the match, $(2_1)$ and $(3)$ form an FPA-diagram as per the data of the match, and finally $(1_2)$ and $(5)$ are FPCs: 
\begin{equation}\label{eq:defNLRSrc}
\ti{defNLRSrc}
\end{equation}
We then define the composite rule via span composition:
\begin{equation}
\sqComp{r_2}{\mu}{r_1}:= (\overline{O}_{21}\leftarrow\overline{\overline{K}}_{2} \rightarrow \overline{J}_{21})\circ (\overline{J}_{21}\leftarrow \overline{\overline{K}}_1\rightarrow \overline{I}_{21})
\end{equation}
\end{itemize}
\end{definition}

\begin{definition}
\AP\emph{\intro{General DPO-rewriting semantics}} over an \kl{rm-adhesive category} $\bfC$:
\begin{itemize}
\item The \AP\emph{\intro{set of DPO-admissible matches}} of a rule \emph{rule} $r=(O\leftarrow K\rightarrow I)\in span(\bfC)$ into an object $X\in \obj{\bfC}$ is defined as
\begin{equation}
\MatchGT{DPO}{r}{X}:= \{ (m,\bar{m},\bar{i})\mid m\in \regmono{\bfC} \land (\bar{m},\bar{i})\in \mPOC{i}{m}\}\,.
\end{equation}
A \AP\emph{\intro{DPO-type direct derivation}} of $X\in \obj{\bfC}$ with rule $r$ along $m\in \MatchGT{DPO}{r}{X}$ is defined as a diagram in~\eqref{eq:defNLRSdd}, where $(1)$ is the multi-POC element chosen as part of the data of the match, while $(2)$ is formed as a pushout.
\item The \AP\emph{\intro{set of DPO-type admissible matches of rules}}  $r_2,r_1\in span(\bfC)$ (also referred to as \emph{dependency relations}) is defined as
\begin{equation}
\begin{aligned}
\RMatchGT{DPO}{r_2}{r_1}&:= \{
(j_2,j_1,\bar{j}_2, \bar{i}_2,\bar{j}_1,\bar{o}_1)\mid \\
&\quad\quad (j_2,j_1)\in \Msum{I_2}{O_1}\\
&\quad\quad  \land (\bar{j}_2, \bar{i}_2)\in \mPOC{i_2}{j_2}\land (\bar{j}_1, \bar{o}_1)\in \mPOC{o_1}{j_1}
\}\diagup_{\sim}\,,
\end{aligned}
\end{equation}
where equivalence is defined up to the compatible universal isomorphisms of multi-sums and multi-POCs (see below).
\item A \AP\emph{\intro{DPO-type rule composition}} of two general rules $r_1,r_2\in span(\bfC)$ along an admissible match $\mu\in\RMatchGT{DPO}{r_2}{r_1}$  is defined via a diagram as in~\eqref{eq:defNLRDPOrc} below, where $(1_2)$ and $(1_1)$ are the multi-POC elements chosen as part of the data of the match, while $(2_2)$ and $(2_1)$ are pushouts: 
\begin{equation}\label{eq:defNLRDPOrc}
\ti{defNLRDPOrc}
\end{equation}
We then define the composite rule via span composition:
\begin{equation}
\comp{r_2}{\mu}{r_1}:= (O_{21}\leftarrow\overline{K}_{2} \rightarrow J_{21})\circ (J_{21}\leftarrow \overline{K}_1\rightarrow I_{21})
\end{equation}
\end{itemize}
\end{definition}
The precise reasons for the definitions of SqPO- and DPO-semantics for generic rules and regular monos as matches will only become evident via the concurrency theorems that will be developed in the following sections.

Let us illustrate the notion of SqPO-type rule composition, as given in Definition~\ref{def:NLRSsqpo}, with the following example in the setting of directed multi-graphs. Note that, since this is an adhesive category, all monos are automatically regular and we therefore have no need to restrict matches to being edge-reflecting monomorphisms.
\begin{equation}\label{eq:SqPOexample}
\vcenter{\hbox{\includegraphics[width=0.9\textwidth]{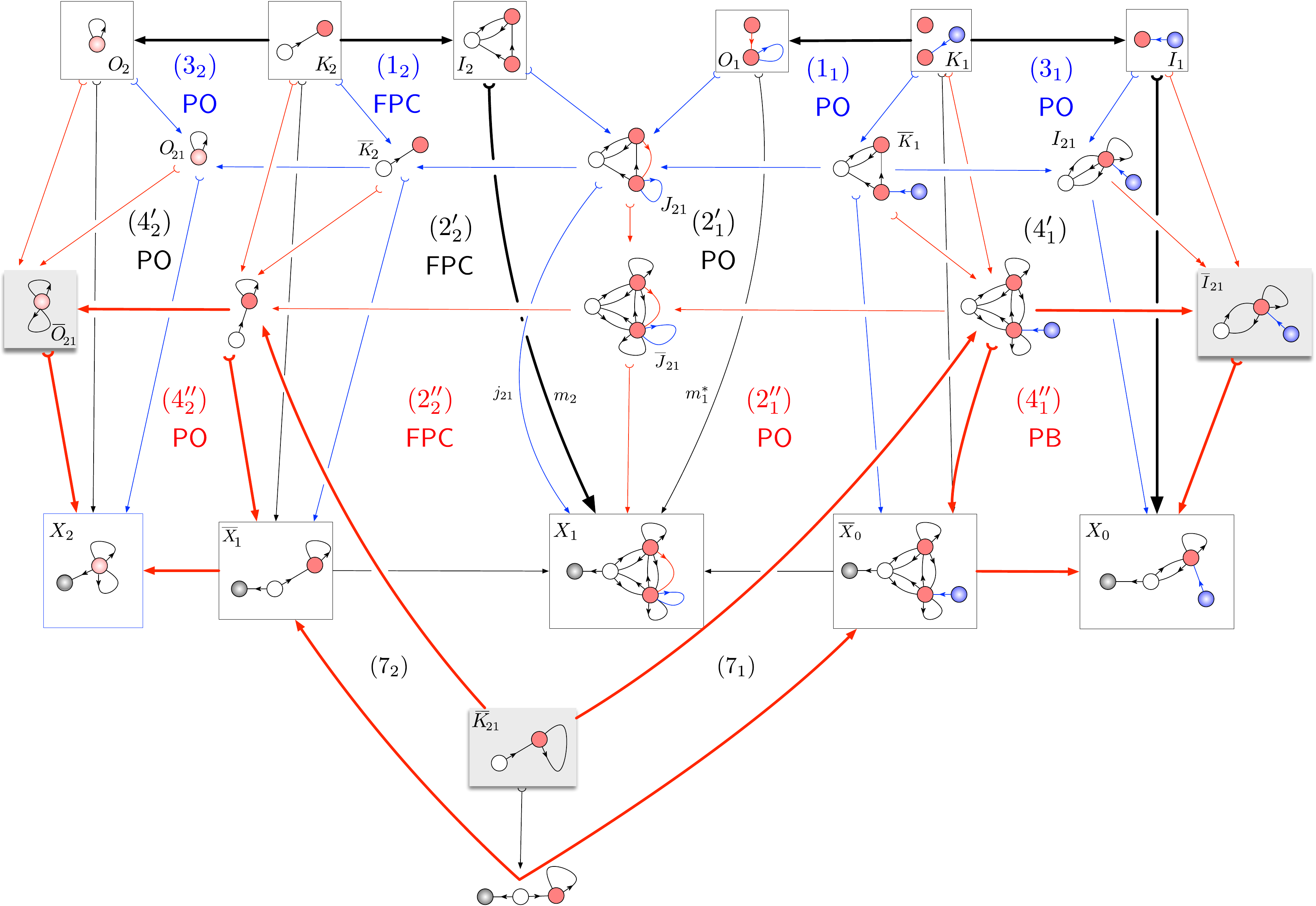}}}
\end{equation}

In this example, we have two rules. The first clones a node\footnote{Note that we have drawn the rule from right to left so that the \emph{input}, sometimes called the left-hand side, of the rule is the topmost rightmost graph. Note also that the structure of the homomorphisms may be inferred from the node positions, with the exception of the vertex clonings that are explicitly mentioned in the text.}, but not its incident edge, then adds a new edge between the original node and its clone and merges the blue node with the original node. The second rule deletes a node and then merges two nodes. The given applications to the graphs $X_0$ and $X_1$ illustrate some of the idiosyncrasies of SqPO-rewriting: 
\begin{itemize}
\item Since the node of $X_0$ that is being cloned possesses a self-loop, the result of cloning is two nodes, each with a self-loop, with one edge going each way between them. 
\item In the application of the second rule to $X_1$, we see the side-effect whereby all edges incident to the deleted node are themselves deleted (as also occurs in SPO-, but not in DPO-rewriting). 
\end{itemize}
The overall effect of the two rewrites can be seen in $X_2$; as usual, this depends on the overlap between the images of $O_1$ and $I_2$ in $X_1$. This overlap is precisely the multi-sum element $J_{21}$. Since our example is set in an adhesive category, this can be most easily computed by taking the PB of $m_1^*$ and $m_2$ and then the PO of the resulting span. The PO that defines the rewrite from $\overline{X}_0$ to $X_1$ can now be factorized by computing the PB of $j_{21}$ and the arrow from $\overline{X}_0$ to $X_1$; this determines $\overline{K}_1$ and its universal arrow from $K_1$ with consequence that $(1_1)$ and $(2_1)$ are both POs. Let us note that $\overline{K}_1$ is the appropriate member of the multi-POC, as determined by the particular structure of $\overline{X}_0$.

The PO $(3_1)$ induces a universal arrow from $I_{21}$ to $X_0$; but an immediate inspection reveals that this homomorphism is not a mono (nor an epi in this case). As such, we cannot hope to use $I_{21}$ as the input/left hand side of the composite rule. Furthermore, we find that the square $(4_1)$ is neither a PB nor a PO. However, the FPA $\overline{I}_{21}$ resolves these problems by enabling a factorization of this square, giving rise to a monomorphism $m_{21}$ into $X_0$, where $(4''_1)$ and $(3_1)+(4'_1)$ are PBs and indeed FPCs. This factorization, as determined by $e_{21}$, can now be \emph{back-propagated} to factorize $(2_1)$ into POs $(2'_1)$ and $(2''_1)$ which gives rise to an augmented version  $\overline{J}_{21}$ of the multi-sum object in the middle. Note moreover that the effect of back-propagation concerns also the contribution of the second rule in the composition: the final output motif contains an extra self-loop (compared to the motif $O_{21}$ defined by the PO $(3_2)$), which is induced by the extra self-loop of $\overline{J}_{21}$ that appears due to back-propagation.

We may then compute the composite rule via taking a pullback to obtain $\overline{K}_{21}$, yielding in summary the rule $\overline{O}_{21}\leftarrow\overline{K}_{21}\rightarrow\overline{I}_{21}$. Performing the remaining steps of the ``synthesis'' construction of the concurrency theorem (compare Appendix~\ref{sec:proofSynthesisSqPO}) then amounts to constructing the commutative cube in the middle of the diagram, yielding the FPC $(7_1)$ and the PO $(7_2)$, and thus finally the one-step SqPO-type direct derivation from $X_0$ to $X_2$ along the composite rule $\overline{O}_{21}\leftarrow\overline{K}_{21}\rightarrow\overline{I}_{21}$.\\

Let us finally note, as a general remark, that if the first rule in an SqPO-type rule composition is output- (or right-) linear then the POC is uniquely determined; and if it is input- (or left-) linear then the PO $(3_1)$ is also an FPC and $(4_1)$ is a PB, by Lemma 2(h) of~\cite{nbSqPO2019}. In this case, the FPA is trivial, and consequently so is the back-propagation process. Our rule composition can thus be seen as a conservative extension of that defined for linear rules in~\cite{nbSqPO2019}. 

\section{Concurrency theorem for non-linear SqPO rewriting}\label{sec:cct}

Part of the reason that a concurrency theorem for generic SqPO-rewriting had remained elusive in previous work concerns the intricate nature of the interplay between multi-sums, multi-POCs and FPAs as seen from the definition of rule compositions according to Definition~\ref{def:NLRSsqpo}, which is justified via the following theorem, constituting the first main result of the present paper:

\begin{theorem}
Let $\bfC$ be a \kl{quasi-topos}, let $X_0\in \obj{\bfC}$ be an object, and let $r_2,r_1\in span(\bfC)$ be two (generic) rewriting rules.
\begin{enumerate}
\item \textbf{Synthesis:} For every pair of admissible matches $m_1\in \MatchGT{SqPO}{r_1}{X_0}$ and $m_2\in \MatchGT{SqPO}{r_2}{r_{1_{m_1}}(X_0)}$, there exist an admissible match $\mu\in \RMatchGT{SqPO}{r_2}{r_1}$ and an admissible match $m_{21}\in \MatchGT{SqPO}{r_{21}}{X_0}$ (for $r_{21}$ the composite of $r_2$ with $r_1$ along $\mu$) such that $r_{21_{m_{21}}}(X_0)\cong r_{2_{m_2}}(r_{1_{m_1}}(X_0))$.
\item \textbf{Analysis:} For every pair of admissible matches $\mu\in \RMatchGT{SqPO}{r_2}{r_1}$ and $m_{21}\in \MatchGT{SqPO}{r_{21}}{X_0}$  (for $r_{21}$ the composite of $r_2$ with $r_1$ along $\mu$), there exists a pair of admissible matches $m_1\in \MatchGT{SqPO}{r_1}{X_0}$ and $m_2\in \MatchGT{SqPO}{r_2}{r_{1_{m_1}}(X_0)}$ such that $r_{2_{m_2}}(r_{1_{m_1}}(X_0))\cong r_{21_{m_{21}}}(X_0)$.
\item \textbf{Compatibility:} If in addition $\bfC$ is \AP\intro{finitary}~\cite[Def.~2.8]{GABRIEL_2014}, i.e., if for every object of $\bfC$ there exist only finitely many regular subobjects up to isomorphisms, the sets of pairs of matches $(m_1,m_2)$ and $(\mu,m_{21})$ are isomorphic if they are suitably quotiented by universal isomorphisms, i.e., by universal isomorphisms of $X_1=r_{1_{m_1}}(X_0)$ and $X_2=r_{2_{m_2}}(X_1)$ for the set of pairs $(m_1,m_2)$, and by the universal isomorphisms of multi-sums, multi-POCs and FPAs for the set of pairs $(\mu,m_{21})$, respectively.
\end{enumerate}
\end{theorem}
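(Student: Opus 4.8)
The three parts are interlocked: \textbf{Analysis} runs the construction of \textbf{Synthesis} in reverse, and \textbf{Compatibility} follows by combining the two with the ``up to universal isomorphism'' uniqueness clauses built into every universal property involved. Throughout I would fix $\cM=\regmono{\bfC}$ and rely on the toolbox of Section~\ref{sec:qt} (pushout--pullback and pullback--pullback decomposition, FPC--FPC composition and decomposition, stability of $\cM$ under pullbacks and under pushouts, the decomposition property of $\cM$-morphisms, and ``pushouts along $\cM$-morphisms are pullbacks''), on the FPC-existence Theorem~\ref{thm:FPC}, and on the three constructions of Section~\ref{sec:special}: multi-sums \eqref{eq:multiSumDef}, $\cM$-multi-POCs \eqref{eq:M-FPC-PO-augmentation-POCs}, and FPAs \eqref{eq:defFPA}.

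\noindent\textbf{Synthesis.} Given $m_1\in\MatchGT{SqPO}{r_1}{X_0}$ and $m_2\in\MatchGT{SqPO}{r_2}{r_{1_{m_1}}(X_0)}$, put $X_1:=r_{1_{m_1}}(X_0)$ and record both direct derivations \eqref{eq:defNLRSdd}; write $m_1^\ast:O_1\rightarrowtail X_1$ for the pushout comparison of the first one (it is in $\cM$ by stability under pushouts). I would then assemble the data of $\mu$ and the match $m_{21}$ by filling in \eqref{eq:defNLRSrc}. \emph{(i)} Use the universal property of multi-sums on the cospan $O_1\xrightarrow{m_1^\ast}X_1\xleftarrow{m_2}I_2$ to get $(j_2,j_1)\in\Msum{I_2}{O_1}$ and a regular mono $y:J_{21}\rightarrowtail X_1$. \emph{(ii)} Pull back $y$ along the arrow $\overline X_0\to X_1$ of the first derivation, producing $\overline K_1$ together with its universal arrow $K_1\to\overline K_1$, and apply pushout--pullback decomposition (using $y\in\cM$) to conclude that the ``input-facing'' square $(1_1)$ is a pushout, i.e.\ that $(\bar j_1,\bar o_1)\in\mPOC{o_1}{j_1}$; here $\bar j_1\in\cM$ follows from stability of $\cM$ under pullbacks and the decomposition property, since $K_1\to\overline X_0$ is in $\cM$ by Theorem~\ref{thm:FPC}. \emph{(iii)} Form the pushout of $\overline K_1\xleftarrow{\bar j_1}K_1\xrightarrow{i_1}I_1$ (checking via stability of $\cM$ under pushouts that it is a pushout along an $\cM$-morphism) and feed it into the FPA construction \eqref{eq:defFPA}, obtaining the epimorphism $\iota_{21}$ that exhibits the augmented input object $\overline I_{21}$ and the attendant FPC completion. \emph{(iv)} \emph{Back-propagate} this factorization along the square $(2_1)$, yielding the augmented middle object $\overline J_{21}$ (and, downstream, the augmented output $\overline O_{21}$), and complete the remaining squares of \eqref{eq:defNLRSrc} --- the FPCs $(1_2),(5)$ and the pushouts $(2_2),(6),(4)$ --- verifying each universal property with the composition lemmas; set $\mu$ to be the resulting tuple and $r_{21}:=\sqComp{r_2}{\mu}{r_1}$. \emph{(v)} Read off $m_{21}:\overline I_{21}\to X_0$ from the outer faces over $X_0$ and check $m_{21}\in\regmono{\bfC}$ --- this is precisely what the FPA buys us, the non-monic $I_{21}\to X_0$ being replaced by the monic $\overline I_{21}\to X_0$. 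Finally, assembling the commutative cube over $X_0$ (the FPC $(7_1)$ and pushout $(7_2)$ of the running example) and iterating FPC/pushout composition gives $r_{21_{m_{21}}}(X_0)\cong r_{2_{m_2}}(r_{1_{m_1}}(X_0))$.

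\noindent\textbf{Analysis.} Given $\mu\in\RMatchGT{SqPO}{r_2}{r_1}$ and $m_{21}\in\MatchGT{SqPO}{r_{21}}{X_0}$, I would first carry out the one-step direct derivation of $X_0$ under $r_{21}$ (FPC, then pushout) and then \emph{cut} the resulting diagram along the data of $\mu$: pulling back the arrows into $X_0$ and using pushout--pullback and FPC--FPC decomposition peels off the $r_1$-contribution as an FPC followed by a pushout, producing an intermediate object $X_1$ and a match $m_1\in\MatchGT{SqPO}{r_1}{X_0}$ (its regular-mono property inherited from $m_{21}$ and the $\cM$-conditions recorded in $\mu$); the multi-sum comparison $J_{21}\rightarrowtail X_1$ re-emerges in this cut, and composing it with the multi-POC and FPA data of $\mu$ yields the second match $m_2\in\MatchGT{SqPO}{r_2}{X_1}$. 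The identity $r_{2_{m_2}}(r_{1_{m_1}}(X_0))\cong r_{21_{m_{21}}}(X_0)$ again follows from uniqueness of FPCs and pushouts.

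\noindent\textbf{Compatibility and the main obstacle.} Under the finitary hypothesis the families $\Msum{I_2}{O_1}$, $\mPOC{o_1}{j_1}$ and $\FPA{\bar j_1}{i_1}$ are all finite, so $\{(m_1,m_2)\}/{\sim}$ and $\{(\mu,m_{21})\}/{\sim}$ are well-defined finite sets; the maps $(m_1,m_2)\mapsto(\mu,m_{21})$ and $(\mu,m_{21})\mapsto(m_1,m_2)$ of Synthesis and Analysis descend to these quotients because each construction step is determined up to the pertinent universal isomorphism (of multi-sums, multi-POCs, FPAs, FPCs and pushouts), and tracing these isomorphisms through the two passes shows the maps are mutually inverse --- hence a bijection. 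I expect the real difficulty to sit in steps \emph{(iii)}--\emph{(iv)} of Synthesis and their inversion in Analysis, namely the \emph{back-propagation}: one must show that the input-facing pushout really is a pushout along an $\cM$-morphism so that an FPA exists, that augmenting $J_{21}$ to $\overline J_{21}$ (and $O_{21}$ to $\overline O_{21}$) does not destroy the $\cM$-status of the arrows needed to form the downstream FPCs $(1_2),(5)$ and pushouts, and that this interlocking chain --- the FPA depending on the chosen multi-POC element, which itself depends on the chosen multi-sum element --- is tight enough that Analysis recovers the original data exactly up to the universal isomorphisms, so that the equivalence $\sim$ on $\RMatchGT{SqPO}{r_2}{r_1}$ is precisely the one induced by these dependencies. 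The failure of the ``exactness'' half of the van Kampen condition for non-regular vertical arrows --- the very reason $\mathbf{SGraph}$ is only a quasi-topos --- is what makes this delicate and forces the Diers-style families in place of unique complements.
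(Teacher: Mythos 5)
Your proposal follows essentially the same route as the paper's proof in Appendix~\ref{app:SqPOcctProof}: the synthesis proceeds by multi-sum identification of $(j_2,j_1)$, pullbacks plus pushout--pullback and vertical FPC--pullback decomposition to recover the multi-POC and FPC data, the epi-$\cM$-factorization of $I_{21}\to X_0$ to produce the FPA and the back-propagated objects $\overline{J}_{21}$ and $\overline{O}_{21}$, and final pullbacks combined with the pushout/FPC composition lemmas; the analysis reconstructs the intermediate states by forming FPCs and a pushout from the rule-composition data and identifies them with the given one-step derivation, and the compatibility claim is settled by matching the two resulting diagrams up to the universal isomorphisms. The only place you are materially thinner than the paper is in the analysis, where the crux is to verify that the pullback-constructed object agrees with $\overline{K}_{21}$ (the paper's $Y\cong\overline{K}_{21}$ step via the universal property of FPCs); your appeal to uniqueness of FPCs and pushouts is exactly the mechanism the paper uses there.
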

\begin{proof}
See Appendix~\ref{app:SqPOcctProof}
\end{proof}

\section{Concurrency theorem for non-linear DPO-rewriting}

The well-known and by now traditional results on concurrency in DPO-type semantics by Ehrig et al.\ were formulated for $\cM$-linear rules in $\cM$-adhesive categories (albeit possibly for non-monic matches; cf.\ \cite[Sec.~5]{ehrig:2006fund} for the precise details), and notably the non-uniqueness of pushout complements along non-linear morphisms posed the main obstacle for extending this line of results to non-linear DPO rewriting. As we will demonstrate in this section, taking advantage of multi-sums and multi-POCs, and if the underlying category $\bfC$ is an \kl{rm-adhesive category}~\cite[Def.~1.1]{garner2012axioms}, one may lift this restriction and obtain a fully well-posed semantics for DPO-rewriting along generic rules, and for regular monic matches:

\begin{theorem}\label{thm:CCTdpo}
Let $\bfC$ be an \kl{rm-adhesive category}, let $X_0\in \obj{\bfC}$ be an object, and let $r_2,r_1\in span(\bfC)$ be (generic) spans in $\bfC$.  
\begin{itemize}
\item \textbf{Synthesis:} For every pair of admissible matches $m_1\in \MatchGT{DPO}{r_1}{X_0}$ and $m_2\in \MatchGT{DPO}{r_2}{r_{1_{m_1}}(X_0)}$, there exist an admissible match $\mu\in \RMatchGT{DPO}{r_2}{r_1}$ and an admissible match $m_{21}\in \MatchGT{DPO}{r_{21}}{X_0}$ (for $r_{21}$ the composite of $r_2$ with $r_1$ along $\mu$) such that $r_{21_{m_{21}}}(X_0)\cong r_{2_{m_2}}(r_{1_{m_1}}(X_0))$.
\item \textbf{Analysis:} For every pair of admissible matches $\mu\in \RMatchGT{DPO}{r_2}{r_1}$ and $m_{21}\in \MatchGT{DPO}{r_{21}}{X_0}$  (for $r_{21}$ the composite of $r_2$ with $r_1$ along $\mu$), there exists a pair of admissible matches $m_1\in \MatchGT{DPO}{r_1}{X_0}$ and $m_2\in \MatchGT{SqPO}{r_2}{r_{1_{m_1}}(X_0)}$ such that $r_{2_{m_2}}(r_{1_{m_1}}(X_0))\cong r_{21_{m_{21}}}(X_0)$.
\item \textbf{Compatibility:} If in addition $\bfC$ is \kl{finitary}, the sets of pairs of matches $(m_1,m_2)$ and $(\mu,m_{21})$ are isomorphic if they are suitably quotiented by universal isomorphisms, i.e., by universal isomorphisms of $X_1=r_{1_{m_1}}(X_0)$ and $X_2=r_{2_{m_2}}(X_1)$ for the set of pairs of matches $(m_1,m_2)$, and by the universal isomorphisms of multi-sums and multi-POCs for the set of pairs of matches $(\mu,m_{21})$, respectively.
\end{itemize}
\end{theorem}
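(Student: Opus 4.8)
The plan is to mirror the classical DPO concurrency theorem of Ehrig et al.~\cite{ehrig:2006fund}, replacing the two ingredients that break for non-linear rules: the $(\cE',\cM)$-pair factorization that classically produces the overlap of $R_1$ and $L_2$ is replaced by a suitable element of the \kl{multi-sum} $\Msum{I_2}{O_1}$ furnished by Lemma~\ref{lem:mSum}, and the (no longer unique) pushout complements are replaced throughout by elements of $\cM$-multi-POCs, invoked via their universal property. Since every $\cM$-multi-POC square is a genuine pushout---unlike the FPC squares of the SqPO setting---no FPC-augmentation is needed and the ``backpropagation'' effect is absent; the argument is thus a simplification of the proof of the SqPO concurrency theorem (Appendix~\ref{app:SqPOcctProof}), obtained by collapsing away the $\FPA{\cdot}{\cdot}$-data. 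The diagram chases use only the standard lemmas available in an \kl{rm-adhesive category}: \kl{pushouts along $\cM $-morphisms are pullbacks} (and their stability under pullbacks), stability of regular monomorphisms under pushouts and pullbacks, the \kl{decomposition property of $\cM $-morphisms}, and pushout/pullback pasting (\kl{pushout-pullback decomposition}, \kl{pullback-pullback decomposition}).

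\emph{Synthesis.} Given a two-step derivation with $X_1=r_{1_{m_1}}(X_0)$, $X_2=r_{2_{m_2}}(X_1)$, write $C_1$ for the first step's context ($K_1\rightarrowtail C_1\to X_0$, $C_1\to X_1$), $o_1^{\ast}\colon O_1\to X_1$ for its comatch, and $C_2,m_2\colon I_2\rightarrowtail X_1$ for the second step's context and match. The first observation is that $o_1^{\ast}\in\cM$: it is the pushout of the $\cM$-morphism $K_1\rightarrowtail C_1$ along $o_1$, hence a regular mono by stability of $\cM=\regmono{\bfC}$ under pushouts (in $\mathbf{SGraph}$: the pushout of an edge-reflecting mono has an edge-reflecting opposite leg), so $(o_1^{\ast},m_2)$ is a cospan of regular monos. (i) Take the pullback $(O_1\leftarrow S\to I_2)$ of $(o_1^{\ast},m_2)$, then the pushout $(O_1\xrightarrow{j_1}J_{21}\xleftarrow{j_2}I_2)$ of the resulting span, with the universally induced $v\colon J_{21}\to X_1$; by Lemma~\ref{lem:mSum} and effectiveness of binary unions of regular subobjects, $(j_2,j_1)\in\Msum{I_2}{O_1}$ and $v\in\cM$. (ii) Pull $C_1\to X_1$ back along $v$ to obtain $\overline{K}_1:=C_1\times_{X_1}J_{21}$; exactly as in the proof of the universal property of $\mPOC{f}{b}$ in Section~\ref{sec:special} (the composite pushout is a pullback, so by \kl{pullback-pullback decomposition} the upper square is a pullback, so by the \kl{decomposition property of $\cM $-morphisms} its left leg lies in $\cM$, so by \kl{pushout-pullback decomposition} it is a pushout), the square $(K_1\to O_1,\ K_1\rightarrowtail\overline{K}_1)$ is a pushout with apex $J_{21}$, so $(\overline{j}_1,\overline{o}_1)\in\mPOC{o_1}{j_1}$; symmetrically $\overline{K}_2:=C_2\times_{X_1}J_{21}\in\mPOC{i_2}{j_2}$. (iii) These data form an admissible match $\mu\in\RMatchGT{DPO}{r_2}{r_1}$, and $r_{21}=\comp{r_2}{\mu}{r_1}$ is assembled as in~\eqref{eq:defNLRDPOrc}, with $\overline{K}_{21}=\overline{K}_1\times_{J_{21}}\overline{K}_2\cong\overline{K}_1\times_{X_1}C_2$. (iv) Re-factorizing the first step's pushout through $I_1\to I_{21}$ (which is in $\cM$, as the pushout of $\overline{j}_1$) yields $m_{21}\colon I_{21}\to X_0$; being the pushout of $\overline{K}_1\rightarrowtail C_1$ along $\overline{K}_1\to I_{21}$, hence also a pullback of $\overline{K}_1\rightarrowtail C_1$ along $C_1\to X_0$, it is a regular mono, and $C_{21}:=C_1\times_{X_1}C_2$ with the induced $\overline{K}_{21}\rightarrowtail C_{21}$ is the $\cM$-multi-POC element witnessing $m_{21}\in\MatchGT{DPO}{r_{21}}{X_0}$. (v) Finally, since $X_1=C_1+_{\overline{K}_1}J_{21}=C_2+_{\overline{K}_2}J_{21}$ are pushouts along $\cM$-morphisms and hence stable under pullback, pulling them back along $C_2\to X_1$ resp.\ $C_1\to X_1$ gives $C_1=C_{21}+_{\overline{K}_{21}}\overline{K}_1$ and $C_2=C_{21}+_{\overline{K}_{21}}\overline{K}_2$; routine pushout pasting then yields $X_0=C_{21}+_{\overline{K}_{21}}I_{21}$ and $r_{21_{m_{21}}}(X_0)=C_{21}+_{\overline{K}_{21}}O_{21}=C_2+_{K_2}O_2=X_2$.

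\emph{Analysis.} Conversely, from $\mu=(j_2,j_1,\overline{j}_2,\overline{i}_2,\overline{j}_1,\overline{o}_1)$ and $m_{21}\in\MatchGT{DPO}{r_{21}}{X_0}$ with context $C_{21}$ ($\overline{K}_{21}\rightarrowtail C_{21}\to X_0$) and $X_2:=r_{21_{m_{21}}}(X_0)$, I would run this backwards: set $C_1:=C_{21}+_{\overline{K}_{21}}\overline{K}_1$, $C_2:=C_{21}+_{\overline{K}_{21}}\overline{K}_2$ (with $\overline{K}_1\rightarrowtail C_1$, $\overline{K}_2\rightarrowtail C_2$ in $\cM$ by stability) and $m_1:=m_{21}\circ(I_1\to I_{21})\in\cM$, and check $X_0=C_1+_{\overline{K}_1}I_{21}$ by pasting; this gives a first direct derivation with output $X_1:=C_1+_{\overline{K}_1}J_{21}$, which carries a regular mono $v'\colon J_{21}\rightarrowtail X_1$ (the $\cM$-leg of that pushout). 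Then $m_2:=v'\circ j_2\in\cM$ and, since $X_1=C_2+_{\overline{K}_2}J_{21}=C_2+_{K_2}I_2$ by pasting with the multi-POC square $(1_2)$, the object $C_2$ is the required step-2 context; a final round of pasting gives $r_{2_{m_2}}(X_1)=C_2+_{K_2}O_2=C_{21}+_{\overline{K}_{21}}O_{21}=X_2$, all by the \kl{universal property of pushouts} and the same identities as in Synthesis.

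\emph{Compatibility, and the main obstacle.} Under finitariness, Synthesis and Analysis are mutually inverse up to the stipulated quotients: every choice made in Synthesis---the multi-sum element $J_{21}$, the multi-POC elements $\overline{K}_1,\overline{K}_2$, the element $m_{21}$ with context $C_{21}$---is pinned down by the output of Analysis up to a unique universal isomorphism, via the uniqueness clauses in the universal properties of \kl{multi-sum}s, of $\cM$-multi-POCs, and of pushouts; finitariness makes the relevant sets of matches finite, so the two passages descend to mutually inverse bijections on the quotiented sets. I expect this last step to be the genuinely delicate one: one must transport all of these universal isomorphisms \emph{simultaneously} through the whole composite diagram~\eqref{eq:defNLRDPOrc} and check that they cohere in the sense of the quotient $\diagup_{\sim}$ in the definition of $\RMatchGT{DPO}{r_2}{r_1}$. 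The Synthesis/Analysis constructions themselves, lacking any FPA, are essentially systematic pushout/pullback pasting; the only point at which the non-adhesiveness of $\bfC$ intervenes is the insistence that all comatches and context-legs be \emph{regular} monos, which is handled uniformly by stability of $\cM$-morphisms under pushouts and pullbacks.
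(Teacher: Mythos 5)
Your synthesis argument follows essentially the same route as the paper's (multi-sum overlap, contexts $\overline{K}_1,\overline{K}_2$ obtained by pulling $J_{21}$ back along the comatches, pushout pasting, and stability of pushouts along regular monos under pullback to obtain the composite context as $C_1\times_{X_1}C_2$), and your analysis correctly exhibits the two-step derivation by pushout--pushout (de)composition. However, there is a genuine gap, and it sits exactly where you wave it away. You assert that ``the only point at which the non-adhesiveness of $\bfC$ intervenes is the insistence that all comatches and context-legs be regular monos,'' and that compatibility follows from the uniqueness clauses of the various universal properties. This misses the one place where the hypothesis ``rm-adhesive'' (rather than merely ``quasi-topos'') is actually used: in the analysis one must show that the \emph{given} composite-rule context $C_{21}$ coincides with the pullback $C_1\times_{X_1}C_2$ of the two reconstructed contexts --- since that pullback is what the synthesis construction produces as the composite context --- and this is obtained from the \emph{exactness} direction of the van Kampen property applied to the cube whose bottom face is the pushout $X_1=J_{21}+_{\overline{K}_2}C_2$, whose top face is the pushout $C_1 = C_{21}+_{\overline{K}_{21}}\overline{K}_1$, and whose back faces are the pullback defining $\overline{K}_{21}$ and the pushout-hence-pullback $C_2=C_{21}+_{\overline{K}_{21}}\overline{K}_2$. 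In a mere quasi-topos only the stability direction of van Kampen holds for rm-pushouts; the exactness direction can fail when the vertical arrows of the cube are not regular monos, which is precisely the case here because the multi-POC legs $\bar{o}_1,\bar{i}_2$ (and hence $\overline{K}_{21}\to\overline{K}_2$, $\overline{K}_1\to J_{21}$) are generic morphisms. Without this step the analysis-then-synthesis round trip need not return the original match datum $(\mu,m_{21})$, and the compatibility bijection breaks down; this is exactly why the theorem is stated for rm-adhesive categories, as Section~5 of the paper emphasizes.

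A second, smaller problem: in step (i) of your synthesis you build $J_{21}$ as the pushout of the pullback of the cospan $(o_1^{\ast},m_2)$ and claim that this is a multi-sum element with $v\in\cM$. Lemma~\ref{lem:mSum} does not give this: its construction part says multi-sum elements are such pushouts \emph{followed by a monic epi}, and Proposition~\ref{prop:qtt} only yields that the induced map $P\to X_1$ is a monomorphism, not a regular one (in $\mathbf{SGraph}$ it may fail to be edge-reflecting, as the paper's multi-sum example for $A=B=\bullet$ shows). The correct move, and the one the paper makes, is to invoke the universal property of $\Msum{I_2}{O_1}$ directly (equivalently, take the epi-regular-mono factorization of $[o_1^{\ast},m_2]\colon O_1+I_2\to X_1$), which yields the multi-sum element together with a regular mono $J_{21}\rightarrowtail X_1$; the remainder of your diagram chase then goes through unchanged.
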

\begin{proof}
See Appendix~\ref{app:DPOcctProof}.
\end{proof}

It is worthwhile noting that for an adhesive category $\bfC$ (in which every monomorphism is regular) and if we consider \emph{linear} rules (i.e., spans of monomorphisms), the characterization of multi-sums according to Lemma~\ref{lem:mSum} permits to verify that DPO-type rule compositions as in Theorem~\ref{thm:CCTdpo} specialize in this setting precisely to the notion of DPO-type \emph{D-concurrent compositions }~\cite[Sec.~7.2]{lack2005adhesive}. This is because, in this case, each multi-sum element is precisely characterized as the pushout of a monic span (referred to as a D-dependency relation between rules in~\cite{lack2005adhesive}), so one finds indeed that Theorem~\ref{thm:CCTdpo} conservatively generalizes the traditional DPO-type concurrency theorem to the non-linear setting. Unlike for the generic SqPO-type setting however, quasi-topoi are \emph{not} sufficient for generic DPO-rewriting, since in the ``analysis'' part of the proof of the DPO-type concurrency theorem the van Kampen property of pushouts along regular monomorphisms is explicitly required (cf.\ Appendix~\ref{app:DPOcctProof}).

\section{Conclusion and outlook}

We have defined an abstract setting for SqPO graph transformation in quasi-topoi that captures the important concrete cases of (directed) multi-graphs and simple graphs. In particular, we have established the existence of appropriate notions of $\mathcal{M}$-multi-sums, $\mathcal{M}$-multi-POCs and $\mathcal{M}$-FPC-PO-augmentations in this setting that permit a proof of the concurrency theorem for general non-linear rules. %

Our immediate next goal is to prove associativity of our notion of rule composition in order to enable the use of rule algebra constructions~\cite{bdg2016,bp2019-ext,BK2020} and tracelets~\cite{behr2019tracelets} for static analysis~\cite{bdg2019,Behr2021} of systems generated by non-linear SqPO or DPO transformations. Intuitively, associativity is necessary in order to guarantee that one may consistently analyze and classify derivation traces based upon nested applications of the concurrency theorem, in the sense that recursive rule composition operations should yield a ``catalogue'' of all possible ways in which rules can interact in derivation sequences. The latter is formalized as the so-called tracelet characterization theorem in~\cite{behr2019tracelets}, whereby any derivation trace is characterized as an underlying tracelet and a match of the tracelet into the initial state of the trace. As illustrated in the worked example presented in~\eqref{eq:SqPOexample}, which highlighted the intriguing effect that comparatively complicated intermediate state in derivation traces involving cloning and fusing of graph structures are consistently abstracted away via performing rule compositions, one might hope that this type of effect persists also in $n$-step derivation traces for arbitrary $n$, for which however associativity is a prerequisite. Concretely, without the associativity property, the tentative ``summaries'' of the overall effects of derivation traces via their underlying tracelets would not be mathematically consistent, as they would only encode the causality of the nesting order in which they were calculated via pairwise rule composition operations. Preliminary results indicate however that indeed our generalized SqPO- and DPO-type semantics both satisfy the requisite associativity property, which will be presented in future work.

Beyond known applications to rule-based descriptions of complex systems, such as in Kappa~\cite{Boutillier:2018aa} and related formalisms, we hope to exploit this framework in graph combinatorics and structural graph theory~\cite{MBM2011}---which frequently employ operations such as edge contraction, which requires input-linear but output-non-linear rules, and node expansion, which further requires input-non-linear rules---to provide stronger tools for reasoning about graph reconfigurations as used, for example, in the study of coloring problems. We moreover expect this framework to be useful in strengthening existing approaches to graph-based knowledge representation~\cite{harmer2019bio}, particularly for the extraction and manipulation of audit trails~\cite{harmer2020reversibility} that provide a semantic notion of version control in these settings.

\clearpage
\appendix

\section{Collection of definitions and auxiliary properties}\label{sec:aux}

\subsection{Universal properties}\label{app:UP}

\begin{lemma}\label{lem:UP}
Let $\bfC$ be a category.
\begin{equation*}
\vcenter{\hbox{\ti{up}}}
\end{equation*}
Then the following properties hold:
\begin{enumerate}
\item \emph{\intro{Universal property of pushouts (POs)}}: Given a commutative diagram as in $(i)$, there exists a morphism $D-\bar{e}\to E$ that is unique up to isomorphisms.
\item \emph{\intro{Universal property of pullbacks (PBs)}}: Given a commutative diagram as in $(ii)$, there exists a morphism $X-\bar{x}\to A$ that is unique up to isomorphisms.
\item \emph{\intro{Universal property of final pullback complements (FPCs)}}: Given a commutative diagram as in $(iii)$ where $(a\circ x.y)$ is a PB of $(d,c')$, there exists a morphism $Y-\bar{x'}\to C$ that is unique up to isomorphisms, and which satisfies that $(x,y)$ is the PB of $(b,x')$. 
\end{enumerate}
\end{lemma}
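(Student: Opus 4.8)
The plan is to treat the three items as bookkeeping around the defining (co)limit universal properties, with the only substantive point being the terminality that characterises final pullback complements. I would dispatch items $(1)$ and $(2)$ first, then spend the effort on $(3)$.

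For item $(1)$, the square in $(i)$ is by hypothesis a pushout of the span $B\leftarrow A\rightarrow C$, while the extra commuting pair $B\rightarrow E\leftarrow C$ is a competing cocone under that same span. By the definition of a pushout as the colimit of the span, there is a unique mediating morphism $\bar e\colon D\to E$ commuting with the two coprojections; the phrase ``unique up to isomorphism'' only records that $D$ itself is determined by the span up to canonical isomorphism. Item $(2)$ is the exact formal dual: $X$ together with its two maps to the feet of the cospan is a cone, the pullback object $A$ is the limiting cone, and we obtain the unique mediating $\bar x\colon X\to A$.

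For item $(3)$ I would unfold the definition of the final pullback complement of the relevant composable pair of arrows (as recalled around Theorem~\ref{thm:FPC}): it is the terminal object in the category of pullback complements of that pair, i.e.\ for every diagram exhibiting a pullback complement there is a unique comparison morphism into it, compatible with all the structure maps. Concretely, the hypothesis in $(iii)$ that $(a\circ x,y)$ is a pullback of $(d,c')$ says precisely that the outer region of the diagram supplies such a pullback-complement datum; terminality of $C$ (qua FPC) then yields the morphism $\bar{x'}\colon Y\to C$, unique up to the canonical isomorphism. It remains to verify that $(x,y)$ is the pullback of $(b,x')$: this is an instance of the pullback pasting lemma (pullback--pullback decomposition) --- the outer square is a pullback, it factors through the square built from $b$ and $x'$ which is itself a pullback (being part of the FPC data), so the residual square with legs $x$ and $y$ is a pullback as well.

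The main obstacle is purely diagrammatic: matching the symbols $a,b,c,c',d,x,y,x'$ to the concrete square $(iii)$ and confirming that ``$(a\circ x,y)$ is a PB of $(d,c')$'' is exactly the datum that the FPC universal property consumes. Once that is in place, existence and uniqueness of $\bar{x'}$ are immediate from terminality, and the pullback conclusion for $(x,y)$ needs nothing beyond the pasting lemma, which holds in an arbitrary category, so no structural hypotheses on $\bfC$ enter this lemma at all.
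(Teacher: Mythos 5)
Your proposal is correct: the paper states this lemma without proof, as a recollection of the defining universal properties, and your unfolding --- mediating morphisms out of the colimit/into the limit for items $(1)$--$(2)$, and for item $(3)$ the terminality clause of the FPC definition followed by pullback--pullback decomposition to see that $(x,y)$ is a pullback of $(b,x')$ --- is exactly the standard argument. The one point worth being careful about, which you do address, is that the FPC universal property quantifies over pullback squares equipped with an arbitrary comparison morphism $x$ into the apex $A$ (not merely over pullback complements of the identical composable pair), and the hypothesis that $(a\circ x,y)$ is a pullback of $(d,c')$ is precisely the datum this terminality consumes.
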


\subsection{Stability properties}\label{sec:sp}

\begin{definition}\label{def:sPOFPC}
Let $\bfC$ be a category.
\begin{equation}\label{eq:stability}
\vcenter{\hbox{\ti{sp}
}}
\end{equation}
\begin{itemize}
\item\AP A \intro{pushout $(*)$ in $\bfC$ is said to be \emph{stable under pullbacks}} iff for every commutative cube over the pushout $(*)$ such as in the diagram above where all vertical squares are pullbacks, the top square $(\dag)$ is a pushout.
\item\AP A \intro{final pullback complement (FPC) $(*)$ in $\bfC$ is said to be \emph{stable under pullbacks}} iff for every commutative cube over the FPC $(*)$ such as in the diagram above where all vertical squares are pullbacks, the top square $(\dag)$ is an FPC.
\end{itemize}
\end{definition}

\begin{lemma}\label{lem:poSP}
Two important examples of categories for which suitable stability properties for pushouts hold are given as follows:
\begin{enumerate}
\item In every \emph{adhesive category} $\bfC$, pushouts along monomorphisms (i.e., pushouts such as $(*)$ in~\eqref{eq:stability} with $a\in \mono{\bfC}$ or $b\in\mono{\bfC}$) are stable under pullback~\cite{ls2004adhesive}. This property is indeed the ``if'' direction of the so-called van Kampen property of adhesive categories~\cite{garner2012axioms}.
\item In a \AP\emph{\intro{regular mono (rm)-quasiadhesive category}}~\cite[Def.~1.1 and Cor.~4.7]{garner2012axioms}, all pushouts along regular monomorphisms exist, these pushouts are also pullbacks, and in particular \AP \intro{pushouts along regular monomorphisms are stable under pullbacks}. A useful characterization of rm-quasiadhesive categories is the following: a small category $\bfC$ with all pullbacks and with pushouts along regular monomorphisms is rm-quasiadhesive iff it has a full embedding into a quasi-topos (preserving the aforementioned two properties). 
\end{enumerate}
\end{lemma}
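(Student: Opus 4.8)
The plan is to reduce both items to their definitional content in the cited works, organising the argument so that ``stability under pullbacks'' appears uniformly as the ``if'' half of a van Kampen-type biconditional, read against the commutative cube of Definition~\ref{def:sPOFPC}.

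\emph{Item 1.} First I would recall that an adhesive category is, by definition, a category with all pullbacks, with pushouts along monomorphisms, and in which such pushouts are \emph{van Kampen squares}: for every commutative cube whose bottom face is a pushout with one leg a monomorphism and whose two rear vertical faces are pullbacks, the top face is a pushout if and only if the two front vertical faces are pullbacks. The assertion we want is precisely the ``if'' direction of this biconditional, applied to the cube in~\eqref{eq:stability} under the hypothesis that all four vertical squares are pullbacks, which forces the top square $(\dag)$ to be a pushout. Hence item 1 follows directly from~\cite{ls2004adhesive}; the only bookkeeping needed is to check that the cube of Definition~\ref{def:sPOFPC} is in the shape demanded by the van Kampen axiom, i.e. that the distinguished arrow of the bottom pushout ($a\in\mono{\bfC}$ or $b\in\mono{\bfC}$) is one of the legs along which the condition is postulated — which is immediate.

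\emph{Item 2.} The argument is parallel, with ``monomorphism'' replaced by ``regular monomorphism''. I would recall that an rm-quasiadhesive category (\cite[Def.~1.1]{garner2012axioms}) is a category with all pullbacks, with pushouts along regular monomorphisms, and in which these pushouts satisfy the rm-van Kampen condition (the van Kampen condition restricted to cubes whose relevant bottom leg is a regular mono). Two standard consequences, both in~\cite{garner2012axioms}, then supply the remaining claims: first, pushouts along regular monos are pullbacks, obtained by instantiating the cube with identity arrows on the rear vertical faces and invoking the ``only if'' direction (a Lack--Soboci\'{n}ski-style argument); second, stability under pullbacks, which is again exactly the ``if'' direction, as in item 1. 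Finally, the characterisation ``a small $\bfC$ with all pullbacks and pushouts along regular monos is rm-quasiadhesive iff it admits a full embedding into a quasi-topos preserving these two structures'' is \cite[Cor.~4.7]{garner2012axioms}; I would simply cite it, remarking only that the substantive direction proceeds via an Artin-glueing / Yoneda-type embedding and that preservation of pullbacks and of pushouts along regular monos is what transports the rm-van Kampen property across the embedding.

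\emph{Main obstacle.} Since every assertion is available essentially verbatim in the cited literature, there is no genuine mathematical obstacle here; the only care required is orientational — matching the cube of Definition~\ref{def:sPOFPC} to the cube used in the definitions of (rm-)adhesivity, and being explicit that ``stable under pullbacks'' names the ``if'' half of the van Kampen biconditional rather than the full equivalence. If one insisted on a self-contained treatment, the real work would be reproving \cite[Cor.~4.7]{garner2012axioms}, namely constructing the quasi-topos embedding of an abstract rm-quasiadhesive category; this I would not attempt and would leave to the reference.
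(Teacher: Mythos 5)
The paper offers no proof of this lemma beyond the citations, so your strategy of reducing everything to the cited definitions is the intended one, and your Item~1 is exactly right: ``stable under pullbacks'' is the ``if'' half of the van Kampen biconditional for pushouts along monos, which is definitional for adhesive categories.

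Your Item~2, however, rests on a mischaracterization of the definition. In \cite[Def.~1.1]{garner2012axioms} an \emph{rm-quasiadhesive} category is \emph{not} defined by requiring pushouts along regular monos to satisfy the (rm-)van Kampen condition; it is defined directly by requiring that such pushouts exist, are stable under pullback, and are pullbacks. The notion you describe --- pushouts along regular monos being van Kampen squares --- is Garner and Lack's strictly stronger notion of an \emph{rm-adhesive} category. So the two properties you propose to ``derive'' (pullback-ness via the identity-cube argument, stability as the ``if'' direction) are in fact the definition itself and require no derivation; conversely, deriving them from a full van Kampen hypothesis silently upgrades rm-quasiadhesive to rm-adhesive. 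This is not harmless bookkeeping in the context of this paper: the ``exactness''/``only if'' direction of van Kampen genuinely fails in quasi-topoi such as $\mathbf{SGraph}$ (the paper says so explicitly in its related-work discussion), and the whole point of distinguishing the SqPO concurrency theorem (quasi-topos, hence rm-quasiadhesive, suffices) from the DPO one (rm-adhesive is required precisely because the van Kampen property is invoked) collapses if your reading of Def.~1.1 were correct. The fix is simply to state Item~2's first three claims as the content of the definition, and to cite \cite[Cor.~4.7]{garner2012axioms} for the embedding characterization as you already do.
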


\begin{lemma}[\cite{reversibleSqPO}, Lem.~1]\label{lem:FPCs}
\AP Let $\bfC$ be a category with all pullbacks. Then \intro{FPCs are stable under pullbacks}.
\end{lemma}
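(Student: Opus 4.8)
The plan is to prove the statement by a direct diagram chase, using only the universal properties of pullbacks and of FPCs (Lemma~\ref{lem:UP}) together with the pasting and cancellation laws for pullbacks — nothing more, consistent with the hypothesis that $\bfC$ merely has all pullbacks. Fix a commutative cube over the FPC $(*)$: write its bottom (FPC) face as the composable pair $A\xrightarrow{f}B\xrightarrow{m}D$ with complement $A\xrightarrow{n}C\xrightarrow{g}D$, its top face $(\dag)$ as $A'\xrightarrow{f'}B'\xrightarrow{m'}D'$ with complement $A'\xrightarrow{n'}C'\xrightarrow{g'}D'$, and the four vertical arrows as $a:A'\to A$, $b:B'\to B$, $c:C'\to C$, $d:D'\to D$. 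Since the four vertical faces are pullbacks we have $A'\cong A\times_B B'\cong A\times_C C'$, $B'\cong B\times_D D'$, and $C'\cong C\times_D D'$; the goal is to show $(\dag)$ is an FPC of $(f',m')$.

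First I would check that $(\dag)$ is a pullback: as the bottom FPC face is a pullback and the two vertical faces adjacent to $D$ are pullbacks, iterated pullback pasting gives $A'\cong A\times_D D'\cong(B\times_D C)\times_D D'\cong(B\times_D D')\times_{D'}(C\times_D D')\cong B'\times_{D'}C'$, which is exactly the pullback condition for $(\dag)$. Then, for the universal property, I would take arbitrary test data for $(\dag)$ as an FPC of $(f',m')$: a pullback square $A_0\xrightarrow{f_0}B'$, $A_0\xrightarrow{n_0}C_0$, $C_0\xrightarrow{g_0}D'$ over $m'$ (so $A_0\cong B'\times_{D'}C_0$) together with a morphism $\alpha:A_0\to A'$ satisfying $f'\circ\alpha=f_0$. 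The key move is to transport this data down the cube by setting $\tilde f_0:=b\circ f_0$, $\tilde g_0:=d\circ g_0$, $\tilde\alpha:=a\circ\alpha$: using $B'\cong B\times_D D'$ and pullback pasting, the square $A_0\xrightarrow{\tilde f_0}B$, $A_0\xrightarrow{n_0}C_0$, $C_0\xrightarrow{\tilde g_0}D$, $B\xrightarrow{m}D$ is again a pullback over $m$, and commutativity of the front vertical face gives $f\circ\tilde\alpha=b\circ f'\circ\alpha=\tilde f_0$. Hence the FPC property of the bottom face yields a unique $\tilde\gamma:C_0\to C$ with $g\circ\tilde\gamma=\tilde g_0$, $\tilde\gamma\circ n_0=n\circ\tilde\alpha$, and with the square $(\tilde\alpha,n_0,n,\tilde\gamma)$ a pullback. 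Finally I would lift $\tilde\gamma$ back up: since $g\circ\tilde\gamma=\tilde g_0=d\circ g_0$ and $C'\cong C\times_D D'$, the pair $(\tilde\gamma,g_0)$ induces $\gamma:C_0\to C'$ with $c\circ\gamma=\tilde\gamma$ and $g'\circ\gamma=g_0$; a short chase through the remaining commuting faces and the fact that $C'$ is a pullback then yields $\gamma\circ n_0=n'\circ\alpha$, the pasting $A_0\cong A\times_C C_0\cong(A\times_C C')\times_{C'}C_0\cong A'\times_{C'}C_0$ shows the induced square $(\alpha,n_0,n',\gamma)$ is a pullback, and uniqueness of $\gamma$ follows by pushing any competitor down to $C$, invoking uniqueness of $\tilde\gamma$, and using again that $C'$ is a pullback.

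The one genuinely delicate point is this ``transport down, lift up'' manoeuvre in the second step: one must verify that pulling the test cospan down to $D$ through the right-hand vertical face still produces a pullback square (so that the bottom FPC's universal property is applicable at all), and dually that the mediating morphism obtained downstairs can be paired with $g_0$ so as to land in $C'\cong C\times_D D'$. Granting these, everything else — the initial observation that $(\dag)$ is a pullback, and the verification of the two defining equations for $\gamma$ together with its uniqueness and the pullback condition on the induced square — is a routine chase through the commuting faces of the cube, with no adhesivity or colimit hypotheses needed anywhere.
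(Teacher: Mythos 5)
Your proof is correct. Note, however, that the paper does not prove this lemma at all: it is imported verbatim from the cited reference (Danos et al., \emph{Reversible Sesqui-Pushout Rewriting}, Lem.~1), so there is no in-paper argument to compare against. Your diagram chase is the standard one and is self-contained: the ``transport down, lift up'' step you flag as delicate is indeed the crux, and it goes through exactly as you say --- horizontal pasting of the test pullback with the vertical face over $m$ shows the transported cospan is still a pullback over $m$, the FPC property of the bottom face produces $\tilde\gamma$, and the vertical face $C'\cong C\times_D D'$ both induces $\gamma$ and forces its uniqueness; the two defining equations and the pullback condition on the induced square then follow from the commuting faces and pullback pasting, as you outline. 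Only the existence of pullbacks and the universal properties are used, consistent with the hypothesis.
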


\begin{proposition}\label{prop:qtt}
In a \kl{quasi-topos} $\bfC$,  \AP \intro{unions of regular subobjects are effective}~\cite[Prop.~10]{quasi-topos-2007}, i.e., the union of two subobjects is computed as the pushout of their intersection, and moreover the following property holds: in a commutative diagram such as below, where $(c,a)$ is the pullback of $(h,p)$, $(d,b)$ the pushout of $(c,a)$, where all morphisms (except possibly $x$) are monomorphisms, and where either $p\in \regmono{\bfC}$ or $h\in \regmono{\bfC}$, then the induced morphism $x:D\rightarrow E$ is a \emph{monomorphism}~\cite[Prop.~2.4]{garner2012axioms}:
\begin{equation}
\vcenter{\hbox{\ti{ebu}}}
\end{equation}
\end{proposition}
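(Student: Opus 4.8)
The plan is to treat the proposition as the assembly of two facts that are available from the literature, and to set out the mechanism behind each. For the first clause — effectiveness of binary unions of regular subobjects, which is \cite[Prop.~10]{quasi-topos-2007} — I would start from the observation that in a \kl{quasi-topos} the regular-subobject classifier $\Omega$ carries an internal Heyting-algebra structure, so that the poset of regular subobjects of each object $X$ is a Heyting algebra functorially in $X$. Given regular subobjects $A\rightarrowtail X$ and $B\rightarrowtail X$ one then has two candidate descriptions of their union: the regular subobject classified by the join of the two classifying maps, and — more concretely — the regular image of $[A\hookrightarrow X,\,B\hookrightarrow X]\colon A+B\to X$ obtained from the \kl{epi-$\cM$-factorization}. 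The remaining step is to check that the evident comparison $A+_{A\cap B}B\to A\cup B$ is invertible; since $\cM=\regmono{\bfC}$ is a \kl{stable system of monics}, both legs $A\cap B\rightarrowtail A$ and $A\cap B\rightarrowtail B$ are regular monos, so the pushout computing $A+_{A\cap B}B$ is also a pullback, and this is enough to identify it with the least regular subobject of $X$ containing both $A$ and $B$.

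For the second, genuinely technical clause — which is \cite[Prop.~2.4]{garner2012axioms} — I would first use the symmetry of the hypotheses to assume $p\in\regmono{\bfC}$. Then $a$, as a pullback of $p$, is again a regular mono (stability of $\cM$ under pullbacks), so the pushout $(d,b)$ of $(c,a)$ is a pushout along a regular mono; since $\bfC$ is \kl{rm-quasiadhesive}, this square is also a pullback, $b$ is a regular mono (regular monos being stable under pushout), and $d$ is a monomorphism. What is left is to show that the comparison $x\colon D\to E$ determined by $x\circ b=p$ and $x\circ d=h$ is monic. The way I would do this is to build the commutative cube whose bottom face is the defining pushout of $D$, whose vertical arrows into $E$ are $p$, $h$ and (on the $C$-corner) $p\circ c=h\circ a$, and whose two back faces are pullbacks — the latter because $p$ is a regular mono and the outer face is the pullback that defines $C$ — and then to exploit the restricted van Kampen behaviour a quasi-topos still enjoys for cubes all of whose relevant connecting arrows are regular monos, so as to conclude that $x$ is monic; tracked through, this amounts to showing that the pullbacks of $x$ along $p$ and along $h$ are isomorphisms onto $A$ and $B$, which is enough because $b$ and $d$ are jointly epimorphic as the injections of a pushout.

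The hard part, I expect, is precisely this last step, and for a structural reason: $\bfC$ need not be an \kl{rm-adhesive category}, so the ``exactness'' half of the van Kampen condition is simply not available for cubes whose vertical arrows fail to be regular monos, and one cannot therefore just declare the defining pushout of $D$ to be van Kampen over an arbitrary cospan into $E$. The resolution I would push through is to let the single regular-mono hypothesis on $p$ propagate along the construction — from $p$ to $a$ by pullback-stability of $\cM$, and from $a$ to $b$ by pushout-stability of regular monos — so that every connecting arrow of the cube that matters is forced into $\cM$, which is exactly the configuration in which the quasi-topos still has the exactness one needs; the plain (possibly non-regular) monos $c$ and $d$ then only ever occur at positions where monicity of a pushout-comparison, rather than full van Kampen exactness, is being demanded.
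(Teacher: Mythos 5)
The paper does not actually prove this proposition: both clauses are imported from the literature (the first is Prop.~10 of Johnstone--Lack--Soboci\'nski, the second is Prop.~2.4 of Garner--Lack), so you are attempting more than the paper itself does. Judged on its own terms, your sketch has gaps in both halves. For the first clause, the step ``the pushout $A+_{A\cap B}B$ is also a pullback, and this is enough to identify it with the least regular subobject of $X$ containing $A$ and $B$'' does not follow: the pushout-is-a-pullback property is internal to that square and says nothing about how the comparison $A+_{A\cap B}B\to X$ sits over $X$ --- a priori it could fail to be monic, and showing that it \emph{is} monic is precisely the content of the second clause specialised to two regular monos. Even then you still owe an argument that this mono is the regular image of $[A\hookrightarrow X,\,B\hookrightarrow X]$, e.g.\ that the induced epi $A+_{A\cap B}B\to A\cup B$ is also a regular (or extremal) mono and hence iso --- in a quasi-topos an epic mono need not be invertible. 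So the logical dependency runs the opposite way from how you have arranged it: clause one rests on clause two, not on an easier lattice-theoretic observation.

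For the second clause, your diagnosis of where the difficulty lies is right, as is the propagation $p\in\regmono{\bfC}\Rightarrow a\in\regmono{\bfC}$ which makes $(d,b)$ a pushout along a regular mono; but the mechanism you then describe does not go through. The cube is not well-formed (a cube cannot have all four vertical arrows ``into $E$''), and the principle you invoke --- van Kampen \emph{exactness} for cubes of regular monos --- is neither available here (the arrows $h$, $c$, $d$ are only monos, and $x$ is not yet known even to be monic) nor needed. What the argument actually requires is only the \emph{stability} of the rm-pushout under pullback: pull the pushout square back along $x\colon D\to E$ to compute the kernel pair, identify $A\times_E D\cong A$, $B\times_E D\cong B$ and $C\times_E D\cong C$ (each by a further application of stability together with $p$, $h$ monic and the defining pullback $C=A\times_E B$), and conclude $D\times_E D\cong A+_C B\cong D$ with both projections the identity, i.e.\ $x$ is monic. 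Your closing reduction --- that the pullbacks of $x$ along $p$ and $h$ being isos onto $A$ and $B$ suffices ``because $b$ and $d$ are jointly epimorphic'' --- is likewise incomplete: joint epimorphicity tests maps \emph{out of} $D$, whereas monicity of $x$ concerns maps \emph{into} $D$; to transport the cover $(b,d)$ along an arbitrary test map you again need stability of the pushout under pullback, at which point the kernel-pair computation is the cleaner route.
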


\subsection{Single-square lemmata specific to $\cM$-adhesive categories}\label{app:SSL}

\begin{lemma}\label{lem:dsl}
Let $\bfC$ be an $\cM$-adhesive category.
\begin{equation}
\vcenter{\hbox{\ti{lemSSL}}}
\end{equation}
\begin{enumerate}
\item \AP \emph{\intro{Pushouts along $\cM$-morphisms are pullbacks}}: if $(*)$ is a pushout and $\beta\in \cM$, then $(*)$ is also a pullback.
\item \AP \emph{\intro{Stability of $\cM$-morphisms under pushouts}}: if $(*)$ is a pushout and $\beta\in \cM$, then $\gamma\in \cM$.
\item \AP \emph{\intro{Stability of $\cM$-morphisms under pullbacks}}: if $(*)$ is a pullback and $\gamma\in \cM$, then $\beta\in \cM$.
\item If $(*)$ is a pullback, $\gamma=id_A$ and $a,\beta\in \cM$, then $a\in \cM$.
\end{enumerate}
Since $(*)$ for $\gamma=id_A$ and $\beta\in \cM\subset \mono{\bfC}$ is always a pullback, 4. may be reformulated as follows:
\begin{enumerate}
\item[4.'] \AP \emph{\intro{Decomposition property of $\cM$-morphisms}}: if $g\circ f\in \cM$ and $g\in \cM$, then $f\in \cM$.
\end{enumerate} 
\end{lemma}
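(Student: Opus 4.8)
The plan is to derive all four statements from the two structural features of an $\cM$-adhesive category $\bfC$: that $\cM$ is a \kl{stable system of monics} (hence contains the isomorphisms and is closed under composition and under pullbacks), and that pushouts along $\cM$-morphisms exist and are $\cM$-van Kampen squares, i.e.\ that for every commutative cube whose bottom face is such a pushout, whose back faces are pullbacks and whose relevant vertical arrows lie in $\cM$, the top face is a pushout if and only if the two front faces are pullbacks. Granting this, statements 1--3 are exactly the classical ``$\cM$-adhesive HLR'' consequences and statement 4 (equivalently 4${}'$) is a short corollary of statement 3; I indicate each derivation.

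Statement 3 is immediate: reading the pullback square $(*)$ with $\gamma\in\cM$ as exhibiting its other designated leg $\beta$ as a pullback of $\gamma$, closure of $\cM$ under pullbacks gives $\beta\in\cM$ --- this is literally clause $(iii)$ of the definition of a \kl{stable system of monics}. Statement 4 then follows with no further use of the van Kampen property. Given composable morphisms $A\xrightarrow{f}B\xrightarrow{g}C$ with $g\circ f\in\cM$ and $g\in\cM$, consider the square with top edge $f$, right edge $g$, bottom edge $g\circ f$ and left edge $id_A$. It commutes, and since $g$ is monic the comparison arrow $A\to A\times_C B$ into the pullback of $(g\circ f,g)$ is an isomorphism, so this square is a pullback (this is precisely the ``always a pullback'' square with $\gamma=id_A$ and $\beta=g\in\cM$ referred to in the statement). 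Consequently $f$ is, up to that isomorphism, the pullback of $g\circ f\in\cM$ along $g$, whence $f\in\cM$ by statement 3 together with closure of $\cM$ under composition with isomorphisms; this establishes statement 4 and its reformulation 4${}'$.

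The statements that genuinely require the van Kampen property are 1 and 2. For statement 1 (pushouts along $\cM$-morphisms are pullbacks) I would apply the $\cM$-van Kampen property to the standard cube over the given pushout $(*)$: choosing the top face as an auxiliary pushout whose two back faces are pullbacks --- which is where $\beta\in\cM$ being monic is used --- the ``only if'' direction of van Kampen forces the two front faces to be pullbacks, one of which, up to the evident identifications, is $(*)$ itself. For statement 2 (stability of $\cM$-morphisms under pushouts) one uses a companion van Kampen cube which, once statement 1 guarantees that $(*)$ is also a pullback, exhibits the square witnessing that $\gamma$ is a monomorphism as a pullback, after which ``$\gamma$ monic'' is upgraded to ``$\gamma\in\cM$'' by closure of $\cM$ under pullbacks. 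Both cube constructions are entirely classical, so I would cite them (\cite{ls2004adhesive} for the adhesive case, and \cite{ehrig:2006fund} for the $\cM$-adhesive generalisation). I expect the only delicate point to be the bookkeeping in these two cubes --- verifying that their vertical arrows lie in $\cM$ exactly where the van Kampen axiom demands, and that the designated back faces are genuine pullbacks --- whereas statements 3 and 4 are a mere unfolding of the axioms of a stable system of monics.
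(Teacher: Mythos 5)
The paper does not actually prove this lemma: it is a list of standard facts about $\cM$-adhesive categories, and in the usual definition of such a category (Ehrig et al.~\cite{ehrig2010categorical,ehrig2014mathcal}) items 2, 3 and 4${}'$ are \emph{axioms} --- $\cM$ is required to be closed under composition, decomposition, pullbacks \emph{and pushouts} --- while only item 1 is a genuine consequence of the (weak) van Kampen property. Your treatment of items 1, 3 and 4 is fine: item 3 is indeed just clause $(iii)$ of a stable system of monics, your reduction of item 4 to item 3 via the pullback square with left edge $id_A$ is correct (and correctly repairs the typo in the statement, which as literally written is vacuous), and your cube for item 1 is the standard Lack--Soboci\'{n}ski argument, with $\beta$ monic used exactly where you say it is.

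The genuine gap is in item 2. You claim that once $(*)$ is known to be a pullback, ``$\gamma$ monic'' can be ``upgraded to $\gamma\in\cM$ by closure of $\cM$ under pullbacks.'' This is a non-sequitur: pullback-stability lets you conclude that a morphism obtained \emph{by pulling back} an $\cM$-morphism is in $\cM$, and in the square $(*)$ it is $\beta$ that is exhibited as a pullback of $\gamma$, not the other way around; the kernel-pair square $(id,id,\gamma,\gamma)$ likewise only witnesses monicity. For a general stable system of monics there is no van Kampen cube that exhibits $\gamma$ as a pullback of something already known to lie in $\cM$, which is precisely why stability under pushouts is postulated rather than derived in the $\cM$-adhesive framework. (It \emph{is} a theorem in the two special cases relevant here --- $\cM=\mono{\bfC}$ in an adhesive category, and $\cM=\regmono{\bfC}$ in a quasi-topos/rm-quasiadhesive category, cf.~\cite{lack2005adhesive,garner2012axioms} --- but the latter proof goes through the characterization of regular monos via their cokernel pairs, not through the mechanism you describe.) So either restate item 2 as part of the hypotheses on $\bfC$, or restrict the claim to the adhesive and quasi-topos cases and cite the corresponding nontrivial proofs.
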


\subsection{Double-square lemmata}\label{app:DSL}

\begin{lemma}\label{lem:dsl}
Let $\bfC$ be a category.
\begin{equation}
\vcenter{\hbox{\ti{dsls}\gdef\mycdScale{1}}}
\end{equation}\label{eq:lemDSL}
Given commutative diagrams as above, the following statements hold:
\begin{enumerate}
\item \AP \emph{\intro{Pushout-pushout-(de-)composition}}: if $(1)$ is a pushout, $(1)+(2)$ is a pushout iff $(2)$ is a pushout.
\item \AP\emph{\intro{Pullback-pullback-(de-)composition}}: if $(2)$ is a pullback, $(1)+(2)$ is a pullback iff $(1)$ is a pullback.
\item \AP \emph{\intro{Pushout-pullback-decomposition}}~\cite[Lem.~4]{reversibleSqPO}: if $(1)+(2)$ is a \textbf{stable pushout}\footnote{Here, ``stable'' refers to stability under pullbacks.} and $(1)$, $(2)$, $(*)$ are pullbacks, then $(1)$ and $(2)$ are both pushouts. (Note: If $a'$ and $b'$ are monomorphisms, the condition on $(*)$ is always satisfied.)
\item \AP\emph{\intro{Pullback-pushout-decomposition}} (variant of~\cite[Lem.~B.2]{GOLAS2014}): if $\chi$ is in $\cM$, $(1)+(2)$ is a pullback and $(1)$ is \textbf{stable pushout}, then $(1)$ and $(2)$ are both pullbacks.
\item \AP \emph{\intro{Horizontal FPC-FPC-(de-)composition}}: if $(2)$ is an FPC (i.e., $(\beta,b')$ is an FPC of $(b,\chi)$), $(1)+(2)$ is an FPC iff $(1)$ is an FPC.
\item \AP \intro{Vertical FPC-FPC-(de-)composition}~\cite[Prop.~36]{Loewe_2015}: if $(3)$ is an FPC (i.e., $(\varphi,g)$ is an FPC of $(f,\varphi')$), 
\begin{enumerate}
\item if $(4)$ is an FPC (i.e., $(\gamma,h)$ is an FPC of $(g, \gamma')$), then $(3)+(4)$ is an FPC (i.e., $(\gamma\circ \varphi,h)$ is an FPC of $(f,\gamma'\circ\varphi')$)
\item if $(3)+(4)$ is an FPC (i.e., $(\gamma\circ \varphi,h)$ is an FPC of $(f,\gamma'\circ\varphi')$) and if $(4)$ is a pullback, then $(3)$  is an FPC (i.e., $(\varphi,g)$ is an FPC of $(f,\varphi')$).
\end{enumerate}
\item \AP \emph{\intro{Vertical FPC-pullback decomposition}}~\cite[Lem.~3]{reversibleSqPO}: if $(3)+(4)$ is an FPC (i.e., $(\gamma\circ \varphi,h)$ is an FPC of $(f,\gamma'\circ\varphi')$), both $(4)$ and $(\dag)$ are pullbacks, and if the diagram commutes, then $(3)$  is an FPC (i.e., $(\varphi,g)$ is an FPC of $(f,\varphi')$) and $(4)$ is an FPC (i.e., $(\gamma,h)$ is an FPC of $(g, \gamma')$). (Note: If $\gamma'$ and $\varphi'$ are monomorphisms, the condition on $(\dag)$ is always satisfied.)
\end{enumerate}
\end{lemma}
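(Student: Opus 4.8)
The plan is to dispatch the seven items by grouping them according to the universal property that drives each argument; in every case the proof is a diagram chase using only that property, so below I indicate the strategy rather than the bookkeeping, and I rely freely on the results collected earlier in Appendix~\ref{sec:aux}.

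First I would treat items~1 and~2, which are the classical pasting lemmas for pushouts and pullbacks. For item~1, given a competing cocone on the outer boundary of $(1)+(2)$, I would restrict it to a cocone on the span underlying $(1)$, factor uniquely through $D$ by the \emph{universal property of pushouts} applied to $(1)$, and then factor through $E$ by the same property applied to $(2)$; uniqueness follows from uniqueness at each stage, and the decomposition direction is the same argument run in the reverse order. Item~2 is the formal dual, obtained by running this reasoning for cones. Neither statement needs any hypothesis on the morphisms, so both hold in an arbitrary $\bfC$.

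Next, for items~3 and~4 I would exploit \emph{stability of pushouts under pullbacks} --- the ``if'' half of the van Kampen condition --- which by Lemma~\ref{lem:poSP} is available for pushouts along (regular) monomorphisms in the categories we care about. For item~3 I would invoke \cite[Lem.~4]{reversibleSqPO} directly, observing that the parenthetical remark is just the standard fact that a pullback square with a monic leg already plays the role of the square $(*)$. For item~4 I would adapt \cite[Lem.~B.2]{GOLAS2014}: replace its $\cM$-adhesive hypothesis by the explicit requirements $\chi\in\cM$ and ``$(1)$ is a stable pushout'', then run the van Kampen cube argument with $(1)+(2)$ as the bottom face and the (essentially degenerate) pushout induced by $(1)$ on top, so that stability of $(1)$ under pullbacks forces the intervening vertical squares --- and hence $(1)$ and $(2)$ --- to be pullbacks.

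Finally, for the FPC items: item~5 I would prove directly from the \emph{universal property of FPCs} (Lemma~\ref{lem:UP}(3)), viewing an FPC as a terminal pullback-complement candidate and arguing, exactly as in the pasting lemma, that a competitor for $(1)+(2)$ restricts --- using that $(2)$ is already an FPC --- to a competitor for $(1)$ and conversely, with the ``is a pullback'' candidate conditions discharged by item~2. Items~6 and~7 are statements about FPCs valid in any category with all pullbacks, so I would simply cite them: item~6 is \cite[Prop.~36]{Loewe_2015} and item~7 is \cite[Lem.~3]{reversibleSqPO}. The part I expect to be delicate is precisely the FPC bookkeeping in items~5--7: unlike the pushout/pullback universal properties, the FPC property quantifies over \emph{all} encompassing pullback squares and additionally demands that the induced mediating morphism make an auxiliary square a pullback, so tracking which square must remain a pullback after composing or decomposing is where care is needed --- the extra pullback hypotheses in~6(b) and~7, and the monomorphism side-conditions noted parenthetically, are exactly what make that tracking go through. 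Before quoting \cite{Loewe_2015,reversibleSqPO} I would therefore take care to align their diagram conventions with those of~\eqref{eq:lemDSL}.
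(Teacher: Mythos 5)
Your overall division of labour matches the paper's: items 1--3 and 5--7 are indeed either standard pasting lemmas or direct citations of \cite{Loewe_2015} and \cite{reversibleSqPO}, and the paper treats them exactly that way. The one item for which the paper supplies an actual argument is item~4 (pullback-pushout decomposition), precisely because it is a ``slight generalization'' of \cite[Lem.~B.2]{GOLAS2014} and cannot simply be quoted --- and this is where your sketch has a genuine gap. You propose to ``run the van Kampen cube argument \ldots so that stability of $(1)$ under pullbacks forces the intervening vertical squares --- and hence $(1)$ and $(2)$ --- to be pullbacks.'' But stability of a pushout under pullbacks runs in the wrong direction for that: it takes a cube whose vertical faces are \emph{already known} to be pullbacks and concludes that the top face is a \emph{pushout}. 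It cannot be used to derive that the squares $(1)$ and $(2)$ are pullbacks, which is the conclusion item~4 asks for. As written, your argument for item~4 does not go through.

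The paper's actual proof of item~4 is a comparison-of-objects argument rather than a cube argument: form the genuine pullback $\bar{B}$ of the cospan underlying square $(2)$; the universal property of pullbacks yields mediating morphisms from $A$ and from $B$ into $\bar{B}$; stability of $\cM$-morphisms under pullbacks together with the decomposition property of $\cM$-morphisms puts the comparison morphism $B\to\bar{B}$ in $\cM$, so the auxiliary version of $(1)$ is a pullback and, by pullback-pullback decomposition, so is the auxiliary version of $(2)$. Now item~3 (pushout-pullback decomposition, applied to the stable pushout $(1)+(2)$ and these two auxiliary pullbacks) shows both auxiliary squares are pushouts; comparing with the hypothesis that $(1)$ is a pushout forces $B\cong\bar{B}$, and hence the original $(1)$ and $(2)$ are pullbacks. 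If you want to repair your write-up, replacing the cube heuristic for item~4 by this construction (note in particular that it \emph{uses} item~3 as a lemma, which your sketch does not) is what is needed; the rest of your proposal is consistent with the paper.
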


\begin{proof}
Most of the above results are cited from previous works (with references provided), yet the \kl{pullback-pushout decomposition} statement is a slight generalization of the variant provided in~\cite[Lem.~B.2]{GOLAS2014} and thus requires a proof. Construct the diagram below via taking a pullback:
\begin{equation}
\vcenter{\hbox{\ti{pbpoDecProof}\gdef\mycdScale{1}}}
\end{equation}
\end{proof}
\begin{itemize}
\item The existence of the morphisms $A-a'\to \bar{B}$ and $B-\beta''\to \bar{B}$ follows from the \kl{universal property of pullbacks}.
\item By \kl{stability of $\cM$-morphisms under pullbacks}, $\beta'$ is in $\cM$, and thus by the \kl{decomposition property of $\cM $-morphisms}, $\beta''$ is also in $\cM$. The latter entails that $(1)$ is a pullback.
\item By \kl{pullback-pullback decomposition}, $(2)$ is a pullback.
\item Since $(1)$ and $(2)$ are pullbacks, $(1)+(2)$ a pushout along a $\cM$-morphism (i.e., a stable pushout), and since $\beta',\beta''\in \cM$, by \kl{pushout-pullback decomposition}, both $(1)$ and $(2)$ are pushouts. Therefore, $B\cong B'$, and the claim follows.
\end{itemize}

\section{Proof of Lemma~\ref{lem:mSum}}\label{app:multiSumLemmaProof}

The only non-trivial part about the existence statement concerns the fact that $y_A$ and $y_B$ are in $\cM$, which follows from the \kl{decomposition property of $\cM $-morphisms}. As for the construction of multi-sum elements, let us first prove that $A+B\xrightarrow{e_P}P$ is indeed an epimorphism. To this end, for a cospan of $\cM$-morphisms $A\xrightarrow{p_A}P\xleftarrow{p_B}B$ obtained via pushout of some span of $\cM$-morphisms $A\xleftarrow{x_A}X\xrightarrow{x_B}B$, let $A+B\xrightarrow{f}P$ for $f=[p_A,p_B]$ denote the induced morphism, and let $A+B\xrightarrow{e_f}P'\xrightarrow{m_F}P$ denote the epi-$\cM$-factorization of $f$. Taking pullbacks along  $m_f$ to obtain the squares marked  $(1_A)$ and $(1_B)$ in~\eqref{eq:MsumProof}(i), which by the \kl{universal property of pullbacks} entails the existence of the morphisms marked $a''$ and $b''$, by \kl{stability of $\cM$-morphisms under pullbacks}, $a'$ and $b'$ are in $\cM$. Thus $(a'',a',id_A,id_A)$ and $(b'',b',id_A,id_A)$ are pullbacks, whence by stability of isomorphisms under pullbacks, $a',a'',b',b''$ are isomorphisms. Form the square marked $(2)$ as a pullback, which 
 by the \kl{universal property of pullbacks} also yields morphisms $\mIO\xrightarrow{\iota_X'} X'$ and $X'\xrightarrow{x}X$. By \kl{pullback-pullback decomposition}, all squares of the bottom commutative cube are pullbacks, so that by stability of isomorphisms under pullbacks, $X'\cong X$. The bottom-most square is a pushout along $\cM$-morphisms and thus a stable pushout, whence by \kl{stability of stable pushouts under pullbacks}, the square marked $(2)$ is a pushout. Thus by the \kl{universal property of pushouts}, $P'\cong P$, and we have proved that $A+B\xrightarrow{f}P$ is indeed an epimorphism (henceforth referred to as $e_P$).
 
 To proceed, denote by $P\xrightarrow{e_Q}Q\xrightarrow{m_Q}$ (with $e_Q\in \epi{\bfC}$ and $m_Q\in \cM$) the epi-$\cM$-factorization of the morphism $P\rightarrow Z$ that exists by the \kl{universal property of pushouts}. By uniqueness of epi-$\cM$-factorizations up to isomorphisms~\cite[Prop.~14.4]{adamek2006}, since $A+B\xrightarrow{e_Q\circ e_P}P\xrightarrow{m_Q}Z$ and $A+B\xrightarrow{e}Y\xrightarrow{m}Z$ are epi-$\cM$-factorizations of $A+B\xrightarrow{[a,b]}Z$, we find that $P\cong Y$. Finally, since $m_Q\in \cM$, the squares marked $(3_A)$ and $(3_B)$ in~\eqref{eq:MsumProof}(ii) are pullbacks. Forming the square marked $(4)$ as a pullback, by \kl{pullback-pullback decomposition} also the two back vertical squares in~\eqref{eq:MsumProof}(ii) are pullbacks, whence by stability of isomorphisms under pullbacks, $X'\cong X$. Since $P$ is a pushout of $\cM$-morphisms and since $(4)$ is a pullback along $\cM$-morphisms, it follows from \kl{effectiveness of binary unions of regular subobjects} that $P\xrightarrow{q}Q$ is a monomorphism, which proves the claim that $e_Q\in \mono{\bfC}\cap \epi{\bfC}$.
\begin{equation}\label{eq:MsumProof}
\ti{MsumProof}
\end{equation}

\section{Proof of the SqPO-type concurrency theorem}\label{app:SqPOcctProof}

Recall that we assume $\bfC$ is a \kl{quasi-topos}, and for the \emph{compatibility} part of the theorem in addition that $\bfC$ is \kl{finitary}.

\subsection{``Synthesis'' part}\label{sec:proofSynthesisSqPO}

Let $X_0\in \obj{\bfC}$ be an object, $r_j=(O_j\leftarrow K_j\rightarrow I_j)\in span(\bfC)$ ($j=1,2$) generic rules, and let $(m_1:I_1\rightarrowtail X_0)\in \RMatchGT{SqPO}{r_1}{X_0}$ and $(m_2:I_2\rightarrowtail X_1)\in \RMatchGT{SqPO}{r_2}{X_1}$ be $SqPO$-admissible matches, where $X_1:=r_{1_{m_1}}(X_0)$. %
Consider then a sequence of $SqPO$-type direct derivations, which yields a diagram as presented in~\eqref{eq:cctProofSynthesisStep1}, %
and  identify the \emph{multi-sum element} $(I_2\rightarrowtail J_{21}\leftarrowtail O_1)$, which is in particular a cospan of $\cM$-morphisms, and unique up to isomorphisms. By the \kl{universal property of multi-sums}, there exists an $\cM$-morphism $J_{21}\rightarrowtail X_1$:
\begin{equation}\label{eq:cctProofSynthesisStep1}
\ti{cctProofSynthesisStep1}\gdef\mycdScale{1}
\end{equation}

Take the pullback $(J_{21}\leftarrow \bar{K}_1\rightarrow \bar{X}_0)$ of $(J_{21}\rightarrow X_1\leftarrow \bar{X}_0)$, and the pullback $(\bar{X}_1\leftarrow \bar{K}_2\rightarrow  J_{21})$ of $(\bar{X}_1\rightarrow X_1\leftarrow J_{21})$, resulting in the following diagram:
\begin{equation}
\ti{cctProofSynthesisStep2}\gdef\mycdScale{1}
\end{equation}

\begin{itemize}
\item By \kl{stability of $\cM$-morphisms under pullbacks}, $(\bar{K}_1\rightarrow \bar{X}_0), (\bar{K}_2\rightarrow \bar{X}_0)\in\cM$.
\item By the \kl{universal property of pullbacks}, there exist the morphisms $K_1\rightarrow \bar{K}_1$ and $K_2\rightarrow \bar{K}_2$.
\item By the \kl{decomposition property of $\cM$-morphisms}, $(K_1\rightarrow \bar{K}_1),(K_2\rightarrow \bar{K}_2)\in \cM$.
\item Since by assumption $(O_1\rightarrowtail X_1\leftarrow \bar{X}_0)$ is the pushout of $(O_1\leftarrow K_1\rightarrowtail \bar{X}_0)$, and since \kl{pushouts along $\cM$-morphisms are pullbacks}, invoking \kl{pullback-pullback decomposition} yields that $(O_1\leftarrow K_1\rightarrowtail \bar{K}_1)$ is a pullback of $(O_1\rightarrowtail J_{21}\leftarrow \bar{K}_1)$. A completely analogous argument reveals that $(\bar{K}_2\leftarrowtail K_2\rightarrow I_2)$ is a pullback of $(\bar{K}_2\rightarrow J_{21}\leftarrowtail I_2)$.
\item Since moreover $(O_1\rightarrowtail J_{21})\in \cM$ and $(J_{21}\rightarrowtail X_1)\in \cM$, so that in particular the square $(1_1)+(2_1)$ is a pushout that is stable under pullbacks, by \kl{pushout-pullback decomposition} the squares $(1_1)$ and $(2_1)$ are also pushouts. 
\item Since $(I_2\rightarrowtail J_{21})\in \cM$ and $(J_{21}\rightarrowtail X_1)\in \cM$, since the square $(1_2)+(2_2)$ is an FPC, and since $(1_2)$ and $(2_2)$ are pullbacks, by \kl{vertical FPC-pullback decomposition} the squares $(1_2)$ and $(2_2)$ are FPCs. 
\end{itemize}

Next, form the squares marked $(3_1)$ and $(3_2)$ in the diagram below by taking pushouts:
\begin{equation}
\ti{cctProofSynthesisStep3}\gdef\mycdScale{1}
\end{equation}
\begin{itemize}
\item By \kl{stability of $\cM$-morphisms under pushouts}, $(I_1\rightarrow I_{21}), (O_2\rightarrow O_{21})\in\cM$.
\item Since $(3_2)+(4_2)$ and $(3_2)$ are pushouts, by \kl{pushout-pushout decomposition} $(4_2)$ is a pushout, and since moreover $\bar{K}_2\rightarrowtail \bar{X}_1$ is in $\cM$, by \kl{stability of $\cM$-morphisms under pushouts} we find that $(O_{21}\rightarrow X_2)\in \cM$.
\end{itemize}

In order to analyze the structure of the induced squares $(3_1)$ and $(4_1)$ in further detail, let us invoke the \kl{epi-$\cM$-factorization} of the morphism $I_{21}\rightarrow X_0$, and construct the following diagram:
\begin{equation}
\ti{cctProofSynthesisStep4}\gdef\mycdScale{1}
\end{equation}
\begin{itemize}
\item Since $(4_1'')$ is constructed as a pullback and $m_{21}\in \cM$, by \kl{stability of $\cM $-morphisms under pullbacks} $\bar{\bar{K}}_1\rightarrow \bar{X}_0$ is in $\cM$. 
\item By the \kl{decomposition property of $\cM $-morphisms}, the morphisms $I_1\rightarrow \bar{I}_{21}$, $\bar{K}_1\rightarrow \bar{\bar{K}}_1$ and $K_1\rightarrow \bar{\bar{K}}_1$ are all in $\cM$.
\item By \kl{vertical FPC-pullback decomposition}, both $(4_1'')$ and $(4_1)+(4_1')$ are FPCs.
\item By \kl{pushout-pushout decomposition}, since $(2_1')$ is constructed as a pushout and $(2_1')+(2_1'')$ is a pushout, $(2_1'')$ is also a pushout.
\item By \kl{stability of $\cM$-morphisms under pushouts}, the morphisms $J_{21}\rightarrow\bar{J}_{21}$ and $\bar{J}_{21}\rightarrow  X_1$ are in $\cM$. 
\item Since $(2_2'')$ is constructed as a pullback and since $(2_2')+(2_2'')$ is an FPC (and thus a pullback), by \kl{pullback-pullback decomposition} $(2_2')$ is a pullback.
\item By \kl{stability of $\cM$-morphisms under pullbacks}, the morphisms $\bar{K}_2\rightarrow \bar{\bar{K}}_2$ and $\bar{\bar{K}}_2\rightarrow \bar{X}_1$ are in $\cM$.
\item By \kl{vertical FPC-pullback decomposition}, $(2_2')$ and $(2_2'')$ are both FPCs.
\item Since $(4_2')$ is constructed as a pushout and since $(4_2')+(4_2'')$ is a pushout, by \kl{pushout-pushout decomposition} $(4_2'')$ is a pushout.
\item By \kl{stability of $\cM$-morphisms under pushouts}, the morphisms $O_{21}\rightarrow \bar{O}_{21}$ and $\bar{O}_{21}\rightarrow X_2$ are in $\cM$.
\end{itemize}

As an intermediate summary, we have thus derived the following information:
\begin{equation}
\ti{cctProofSynthesisStep4B}\gdef\mycdScale{1}
\end{equation}
\begin{itemize}
\item As indicated via the dotted lines, the vertical composition of the top two rows of the diagram yields a two-step sequence of SqPO-type direct derivations from $\bar{I}_{21}$ along rule $(O_1\leftarrow K_1\rightarrow I_1)$ with match $\bar{m}_1$ into $\bar{J}_{21}$, and then by rule $(O_2\leftarrow K_2\rightarrow I_2)$ with match $\bar{m}_1$ to $\bar{O}_{21}$.
\item The data of squares $(3_1)$ and $(4_1')$ furnishes an $\cM$-FPC augmentation (i.e., of the pushout square $(3_1')$ via morphisms $I_1\rightarrow \bar{I}_{21}$, $I_{21}\rightarrow\bar{I}_{21}$ and $\bar{K}_1\rightarrow \bar{\bar{K}}_1$).
\end{itemize}

To proceed, form the squares $(5)$ and $(6)$ in the diagram below by taking pullbacks:
\begin{equation}\label{eq:cctProofSynthesisStep5}
\ti{cctProofSynthesisStep5}\gdef\mycdScale{1}
\end{equation}
\begin{itemize}
\item By the \kl{universal property of pullbacks}, there exists a morphism $K_{21}\rightarrow \bar{K}_{21}$.
\item By \kl{pullback-pullback decomposition}, since $(5)+(2_1'')$ and $(6)$ are pullbacks, $(7_2)$ is a pullback, and analogously since $(5)+(2_2'')$ is a pullback and $(6)$ is a pullback, $(7_1)$ is a pullback.
\item By \kl{stability of $\cM$-morphisms under pullbacks}, the morphism $K_{21}\rightarrow \bar{K}_{21}$ is in $\cM$.
\item Since $(2_1'')$ is a pushout along an $\cM$-morphism and thus stable under pullbacks, $(7_2)$ is a pushout. Thus by \kl{pushout composition}, $(7_2)+(4_2'')$ is a pushout.
\item Since $(2_2'')$ is an FPC and \kl{FPCs are stable under pullbacks}, $(7_1)$ is an FPC. Thus by \kl{horizontal FPC composition}, $(7_1)+(4_1'')$ is an FPC.
\end{itemize}
This concludes the proof of the ``synthesis'' part of the concurrency theorem, since the latter two points exhibit the data of a single-step SqPO-type direct derivation (of $X_0$ along $(\bar{O}_{21}\leftarrow K_{21}\rightarrow \bar{I}_{21})$ along match $\bar{m}_{21}$).

\subsection{``Analysis'' part}

Suppose we were given an SqPO-type composite rule as defined via the data in the diagram below:
\begin{equation}
\ti{cctProofAnalysisCR}\gdef\mycdScale{1}
\end{equation}
Here, compared to the diagram in~\eqref{eq:cctProofSynthesisStep5}, we have for brevity only explicitly depicted the vertical compositions of the top two rows in~\eqref{eq:cctProofSynthesisStep5} (i.e., $(\bar{3}_2)$ and $(\bar{1}_1)$ are pushouts, while $(\bar{1}_2)$ and $(\bar{3}_1)$ are FPCs). According to the definition of SqPO-type direct derivations, we furthermore are given that $(8_1)$ is an FPC and $(8_2)$ a pushout.

Extend this diagram by forming FPC $(9)$, pushout $(10)$ and FPC $(11)$:
\begin{equation}\label{eq:cctAnalysisProofStep1}
\vcenter{\hbox{\ti{cctAnalysisProofStep1}}}
\end{equation}
\begin{itemize}
\item By the \kl{universal property of FPCs}, there exist morphisms $\overline{K}_{21}\rightarrow \bar{X}_0$ and $\overline{K}_{21}\rightarrow \bar{X}_1$.
\item By \kl{stability of $\cM$-morphisms under pullbacks} and \kl{under pushouts}, respectively, the morphisms $\overline{\overline{K}}_1\rightarrow \bar{X}_0$,  $\overline{J}_{21}\rightarrow X_1$ and $\overline{\overline{K}}_2\rightarrow \bar{X}_1$ are in $\cM$.
\item Since $(8_1)=(12_1)+(9)$ and $(9)$ are FPCs, by \kl{horizontal FPC decomposition}, $(12_1)$ is an FPC.
\end{itemize}

To proceed, we have to construct the following auxiliary diagram, where the square $(14)$ on the bottom of the diagram is formed by taking the pullback $(\bar{X}_1\leftarrow Y\rightarrow \bar{X}_0)$ of the cospan $(\bar{X}_1\rightarrow X_1\leftarrow \bar{X}_0)$:
\begin{equation}\label{eq:cctAnalysisProofStep2}
\vcenter{\hbox{\ti{cctAnalysisProofStep2}}}
\end{equation}
\begin{itemize}
\item By the \kl{universal property of pullbacks}, there exist morphisms $K_{21}\rightarrow Y$ and $\bar{K}_{21}\rightarrow Y$.
\item By \kl{pullback-pullback decomposition}, since $(5)+(11)$ and $(14)$ are pullbacks, $(15_1)$ is a pullback, and since moreover $(5)+(10)$ and $(14)$ are pullbacks, $(15_2)$ is a pullback.
\item Since $(10)$ is a pushout along an $\cM$-morphism, and since by assumption $\cM=\regmono{\bfC}$, by \kl{stability of pushouts along regular monomorphisms under pullbacks}, $(15_2)$ is a pushout.
\item By \kl{stability of FPCs under pullbacks}, $(15_1)$ is an FPC.
\item Since thus $(15_1)$ and $(12_1)$ are both FPCs of $K_{21}\rightarrow I_{21}\rightarrow \bar{X}_0$, by the \kl{universal property of FPCs} we find that $Y\cong \bar{K}_{21}$, and thus also that $(15_2)$ is a pushout, and the outer bottom square (which is the square labeled $(13)$ in~\eqref{eq:cctAnalysisProofStep1}) is a pullback.
\end{itemize}

Back to the diagram in~\eqref{eq:cctAnalysisProofStep1}, the previous auxiliary results entail in particular that since $(15_2)$ is a pushout and the diagram commutes, by the \kl{universal property of pushouts} there exists a morphism $\overline{X}_1\rightarrow X_2$ (yielding the square marked $(16)$ in the diagram below):
\begin{equation}\label{eq:cctAnalysisProofStep3}
\vcenter{\hbox{\ti{cctAnalysisProofStep3}}}
\end{equation}
\begin{itemize}
\item Since $(15_2)$ and $(8_2)=(16)+(15_2)$ are pushouts, by \kl{pushout-pushout decomposition} the square $(16)$ is a pushout.
\item By \kl{vertical FPC composition} and by \kl{pushout composition}, respectively, $(\bar{3}_1)+(9)$ and $(\bar{1}_2)+(11)$ are FPCs, while $(\bar{1}_1)+(10)$ and $(\bar{3}_2)+(16)$ are pushouts.
\end{itemize}
This concludes the ``analysis'' part of the SqPO-type concurrency theorem, since the aforementioned statements amount to the structure of a sequence of SqPO-type direct derivations.

\subsection{``Compatibility'' part}

In order to prove that the set of pairs of SqPO-admissible matches of rules into objects on the one hand, and the set of pairs of SqPO-admissible rule compositions and SqPO-admissible matches of composite rules into objects on the other hand are isomorphic under a suitable notion of quotients, note first that the commutative diagrams constructed in the final steps of each of the ``analysis'' and of the ``synthesis'' parts of the proof are precisely equivalent. The data of a sequence of two SqPO-type direct derivations is uniquely defined up to universal isomorphisms originating from the \kl{universal property of FPCs} and the \kl{universal property of pushouts}. From each such two-step sequence, the ``synthesis'' construction permits to extract a SqPO-type direct derivation of a particular SqPO-type composite rule along an SqPO-admissible match, where this construction is again unique up to universal isomorphisms arising from the \kl{universal property of multi-sums}, the \kl{universal property of pullbacks} and the \kl{universal property of FPAs}. Quotienting the aforementioned sets of matches by these isomorphisms then permits to establish the postulated bijection.

\section{Proof of the DPO-type concurrency theorem}\label{app:DPOcctProof}

The proof of the theorem follows in essence the strategy for the proof of the traditional version for D-concurrent compositions as put forward in~\cite[Sec.~8.2]{lack2005adhesive} (compare~\cite{ehrig:2006fund}), adapted to the setting of non-linear rules via a few important modifications, namely via the multi-sum and multi-POC constructions. Moreover, since we will require the van Kampen property at a crucial point in the ``analysis'' part of the proof, the theorem statement includes the assumption that the underlying category $\bfC$ must be an \kl{rm-adhesive category} (rather than merely a \kl{quasi-topos} as in the non-linear SqPO setting).

\subsection{``Synthesis'' part}

Let $X_0\in \obj{\bfC}$ be an object, $r_j=(O_j\leftarrow K_j\rightarrow I_j)\in span(\bfC)$ ($j=1,2$) generic rules (i.e., spans of generic morphisms), and let $(m_1,x_0)\in \RMatchGT{DPO}{r_1}{X_0}$ and $(m_2, x_1)\in \RMatchGT{DPO}{r_2}{X_1}$ be $DPO$-admissible matches, where $X_1:=r_{1_{m_1}}(X_0)$. %
The first steps of this part of the proof are completely analogous to the corresponding proof in the SqPO setting (Section~\ref{sec:proofSynthesisSqPO}). %
Consider then a sequence of DPO-type direct derivations, which yields a diagram as presented in~\eqref{eq:cctProofSynthesisStep1}, and  identify (uniquely up to isomorphism) the element $(I_2\rightarrowtail J_{21}\leftarrowtail O_1)$ of the \kl{multi-sum} (which is in particular a cospan of $\cM$-morphisms), which by the \kl{universal property of multi-sums} entails the existence of an $\cM$-morphism $J_{21}\rightarrowtail X_1$:
\begin{equation}\label{eq:cctDPOproofSynthesisStep1}
\ti{cctDPOproofSynthesisStep1}
\end{equation}

Take the pullback $(J_{21}\leftarrow \bar{K}_1\rightarrow \bar{X}_0)$ of $(J_{21}\rightarrow X_1\leftarrow \bar{X}_0)$, and the pullback $(\bar{X}_1\leftarrow \bar{K}_2\rightarrow  J_{21})$ of $(\bar{X}_1\rightarrow X_1\leftarrow J_{21})$, resulting in the following diagram:
\begin{equation}
\ti{cctDPOproofSynthesisStep2}
\end{equation}

\begin{itemize}
\item By \kl{stability of $\cM$-morphisms under pullbacks}, $(\bar{K}_1\rightarrow \bar{X}_0), (\bar{K}_2\rightarrow \bar{X}_0)\in\cM$.
\item By the \kl{universal property of pullbacks}, there exist the morphisms $K_1\rightarrow \bar{K}_1$ and $K_2\rightarrow \bar{K}_2$.
\item By the \kl{decomposition property of $\cM$-morphisms}, $(K_1\rightarrow \bar{K}_1),(K_2\rightarrow \bar{K}_2)\in \cM$.
\item Since $(1_1)+(2_1)$ and $(1_2)+(2_2)$ are both pushouts along $\cM$-morphisms and thus also pullbacks (since \kl{pushouts along $\cM$-morphisms are pullbacks}), by \kl{pullback-pullback decomposition} both $(2_1)$ and $(2_2)$ are pullbacks. Since moreover \kl{pushouts along $\cM$-morphisms are stable under pullbacks}, by \kl{pushout-pullback decomposition} the squares $(1_1)$, $(1_2)$, $(2_1)$ and $(2_2)$ are all pushouts.
\end{itemize}

Next, form the squares marked $(3_1)$ and $(3_2)$ in the diagram below by taking pushouts:
\begin{equation}
\ti{cctDPOproofSynthesisStep3}
\end{equation}
\begin{itemize}
\item By \kl{stability of $\cM$-morphisms under pushouts}, $(I_1\rightarrow I_{21}), (O_2\rightarrow O_{21})\in\cM$.
\item Since $(3_k)+(4_k)$ and $(3_k)$ are pushouts (for $k=1,2$), by \kl{pushout-pushout decomposition} $(4_1)$ and $(4_2)$ are pushouts, and since moreover $\bar{K}_1\rightarrowtail \bar{X}_0$ and $\bar{K}_2\rightarrowtail \bar{X}_1$ are in $\cM$, by \kl{stability of $\cM$-morphisms under pushouts} we find that $(I_{21}\rightarrow X_0)$ and $(O_{21}\rightarrow X_2)$ are also in $\cM$.
\end{itemize}

For the next step, let us first of all make explicit all information on the nature of all squares and morphisms constructed thus far, and let us construct in addition the squares marked $(5)$ and $(6)$ in the diagram below by taking pullbacks:
\begin{equation}
\ti{cctDPOproofSynthesisStep4}
\end{equation}
\begin{itemize}
\item By the \kl{universal property of pullbacks}, there exists a morphism $K_{21}\rightarrow \bar{K}_{21}$.
\item Since $(1_1)$ and $(1_2)$ are pushouts along $\cM$-morphisms and thus pullbacks, by \kl{pullback-pullback decomposition} both $(7_1)$ and $(7_2)$ are pullbacks. Thus by \kl{stability of $\cM$-morphisms under pullbacks}, $K_{21}\rightarrow \bar{K}_{21}$ is in $\cM$. Since moreover in a quasi-topos, \kl{pushouts along $\cM$-morphisms are stable under pullbacks}, we find\footnote{Note in particular that at this point, we are \emph{not} using the rm-van Kampen property, i.e., at least the ``synthesis'' construction would indeed be feasible via requiring $\bfC$ to be merely a quasi-topos and not an rm-adhesive category. However, we shall see that the ``analysis'' part of the proof relies upon precisely the rm-van Kampen property in a crucial step.} that $(7_1)$ and $(7_2)$ are pushouts.
\item Finally, by \kl{pushout composition}, we find that $(4_1)+(7_1)$ and $(4_2)+(7_2)$ are pushouts, whence in summary we have succeeded in exhibiting a DPO-type direct derivation of $X_0$ along the composite rule $(O_{21}\leftarrow K_{21}\rightarrow I_{21})$ along the admissible match $(I_{21}\rightarrowtail X_0)$.
\end{itemize}

\subsection{``Analysis'' part}

Suppose we were given a DPO-type direct derivation of an object $X_0\in \obj{\bfC}$ along a DPO-type composite rule as specified via the data of the diagram below:
\begin{equation}
\ti{cctDPOproofAnalysisStep0}
\end{equation}

Construct the diagram below by taking pushouts to form the squares marked $(9_1)$, $(9_2)$ and $(10)$ (where we could have chosen w.l.o.g.\ to form square $(11)$ instead):
\begin{equation}
\ti{cctDPOproofAnalysisStep1}
\end{equation}
\begin{itemize}
\item By the \kl{universal property of pushouts}, there exist morphisms $\bar{X}_0\rightarrow X_0$, $\bar{X}_1\rightarrow X_1$ and $\bar{X}_1\rightarrow X_2$.
\item By \kl{pushout-pushout decomposition}, the squares marked $(11)$,  $(13_1)$ and $(13_2)$ are pushouts, and thus by \kl{stability of $\cM$-morphisms under pushouts}, all vertical morphisms of these squares are in $\cM$.
\item Since $\bfC$ is by assumption an \kl{rm-adhesive category}, by the rm-van Kampen property the square marked $(12)$ is found to be a pullback.
\item Finally, by \kl{pushout composition}, the squares $(3_1)+(13_1)$, $(2_1)+(10)$, $(3_2)+(11)$ and $(3_2)+(13_2)$ are pushouts, whence we have indeed exhibited a sequence of two DPO-type direct derivations.
\end{itemize}

\subsection{``Compatibility'' part}

Upon closer inspection of the steps followed in the ``synthesis'' and ``analysis'' parts of the proof, respectively, note first that both parts render precisely the same structure of commutative diagrams (where it is important to note that full compatibility is ensured via the use of the rm-van Kampen property in order to demonstrate that square $(12)$ is a pullback). Therefore, the set of pairs of DPO-admissible matches of rules into objects and the set of pairs of a DPO-admissible match of rules and a DPO-admissible match of the composite rules into objects are isomorphic precisely if we quotient each set by the relevant notion of universal isomorphisms (i.e., the former set by the isomorphisms arising from the \kl{universal property of multi-POCs}, and the latter set by isomorphisms arising from the \kl{universal property of multi-sums} as well as by the \kl{universal property of multi-POCs}, respectively).

\end{document}